\theoremstyle{definition}
\newcommand{\remove}[1]{}
\def\math#1{$#1$}
\newcommand{\param}{\boldsymbol{\theta}}
\newcommand{\obj}{\mathcal{E}}
\newcommand{\degree}{\boldsymbol{\delta}}
\newcommand{\Steady}{ {\mathbf{x}}}
\newcommand{\steady}{ {x}}
\newtheorem{theorem}{Theorem}
\newtheorem{definition}{Definition}
\newtheorem{lemma}{Lemma}
\newtheorem{corollary}{Corollary}
\begin{document}
\preprint{APS/123-QED}

\title{Efficient parameter inference in networked dynamical systems via steady states: A surrogate objective function approach integrating mean-field and nonlinear least squares}

\author{Yanna Ding}  
\author{Malik Magdon-Ismail}
\author{Jianxi Gao}
\email{gaoj8@rpi.edu}
\affiliation{Department of Computer Science, Rensselaer Polytechnic Institute, Troy, New York 12180, USA}

\date{\today}%

\begin{abstract} 
In networked dynamical systems, inferring governing parameters is crucial for predicting nodal dynamics, such as gene expression levels, species abundance, or population density. While many parameter estimation techniques rely on time-series data, particularly systems that converge over extreme time ranges, only noisy steady-state data is available, requiring a new approach to infer dynamical parameters from noisy observations of steady states. However, the traditional optimization process is computationally demanding, requiring repeated simulation of coupled ordinary differential equations (ODEs). To overcome these limitations, we introduce a surrogate objective function that leverages decoupled equations to compute steady states, significantly reducing computational complexity. Furthermore, by optimizing the surrogate objective function, we obtain steady states that more accurately approximate the ground truth than noisy observations and predict future equilibria when topology changes. We empirically demonstrate the effectiveness of the proposed method across ecological, gene regulatory, and epidemic networks. Our approach provides an efficient and effective way to estimate parameters from steady-state data and has the potential to improve predictions in networked dynamical systems.

\end{abstract}
 
\maketitle

\section{\label{sec:intro}Introduction}

Knowing a network's dynamics is essential for understanding and predicting its behavior in complex systems, with broad application in ecology~\cite{may1974biological,bascompte2013mutualistic,holling1973resilience,scheffer2001catastrophic,holland2002population,landi2018complexity}, biology~\cite{barabasi2004network,dai2012generic,casey2006piecewise,karlebach2008modelling}, epidemiology~\cite{liu1987dynamical,pastor2015epidemic}, energy supply~\cite{kessel1986estimating,zimmerman2010matpower}, neurophysiology~\cite{cabral2014exploring}, and organizational psychology~\cite{wang2016dynamic}.  
Network dynamics often consists of three fundamental components: dynamical models, parameters, and network topology. The dynamical models are closed-form mathematical expressions, including polynomials, trigonometric functions, etc. Dynamical parameters are essential numerical values, such as polynomial orders and scaling factors within these mathematical expressions. The network topology determines the  coupling strengths between nodes. Our focus is parameter estimation, given underlying topology, dynamical models, and observed steady states. Accurate parameter estimation is critical for computing important network tasks, including computing tipping point leading to systems collapses~\cite{pereira2015control,jiang2018predicting,morone2019k,phillips2020spatial,macy2021polarization,zhang2022estimating,liu2022network,wu2023rigorous}, network control~\cite{liu2011controllability,gao2014target}, synchronization~\cite{arenas2008synchronization}, and many more.

There are existing methods for inferring parameters~\cite{gugushvili2012sqrt,ramsay2007parameter}, topology~\cite{prasse2022predicting}, and dynamical models~\cite{bongard2007automated,schmidt2009distilling,brunton2016discovering,gao2022autonomous}. These approaches typically use a time series of system states, which is much richer information than just steady states.  
For inferring dynamical models, the two approaches are to use semi-parametric approaches like~\cite{bongard2007automated,schmidt2009distilling,brunton2016discovering,gao2022autonomous} or some deep learning methods~\cite{chen2018neural,zang2020neural,huang2021coupled}. Notably, these deep learning methods often require extensive time-series data for effective learning and may struggle to generalize to unobserved dynamical systems.
Also note that existing methods relying on a time series of states can be costly or even impossible in certain scenarios~\cite{kramer2014hamiltonian}. Ecological systems evolve over millennia~\cite{slobodkin1980growth,oro2022long}, and we only observe the current steady state of each species. For biomedical networks, equilibration is at small time scales, making it difficult for experimental devices to capture interim states~\cite{karlebach2008modelling,boiger2019continuous}. Despite recent progress to learn microscopic dynamics from multiple sets of steady states generated by perturbation experiments~\cite{kramer2014hamiltonian,barzel2015constructing,santra2018fitting}, there is still a lack of a universal framework to infer dynamical parameters from a single noisy steady-state observation of a networked dynamics. 

Parameter estimation for differential equations is often time-consuming due to repeated numerical integration. One approach to address this issue is fitting proposed models to the derivatives of the observed time-series data approximated using finite difference~\cite{prasse2022predicting}. However, this approach may yield inaccurate results when the data resolution is low, and, of course, it is not applicable when only the equilibrium state is available. One common objective function to estimate parameters is to minimize the discrepancy between observed states and those generated by the learned model using numerical integration techniques~\cite{ramsay2007parameter}. However, repeated numerical integration to get network states is computationally intensive, especially for coupled complex systems with nonlinear dynamics.
In this paper, our aim is to improve the efficiency of parameter estimation for ODEs when the only available data is noisy observation  of steady states. 

The mean-field network reduction approach, introduced in \cite{gao2016universal}, simplifies coupled systems while attempting to preserve their original dynamics. This approach collapses a high-dimensional network into a one-dimensional effective state which drives the dynamics of all other states. Extensive studies have been conducted to evaluate the accuracy of this approach \cite{gao2016universal,kundu2022accuracy}. Alternative  dimensionality reduction methods have been explored, including those proposed in~\cite{ laurence2019spectral,tu2021dimensionality,vegue2023dimension}. Furthermore, researchers have extended network reduction approaches to forecast network collapse~\cite{jiang2018predicting}, predict steady states from incomplete network topology \cite{jiang2020true} and infer degrees from observed steady states \cite{jiang2020inferring}. While these previous works assume prior knowledge of dynamical parameters, our work builds upon the mean-field approach and combines it with nonlinear least squares (NLS) to estimate unknown parameters.
 
In this work, we propose a surrogate objective function based on mean-field estimation of steady states that significantly reduces computational time while preserving the shape of the exact NLS objective function. The parameter-specific steady states are approximated using a low-dimensional, decoupled system of ODEs in the mean-field approximation. We demonstrate empirically that NLS inference using the surrogate objective produces parameters that recover steady states that match the ground truth more accurately than the observed noisy steady states. Furthermore, we demonstrate that the parameters inferred through our method can predict state evolution when the network undergoes changes.

\begin{figure*}[t]
 \centering
\includegraphics[width=\textwidth]{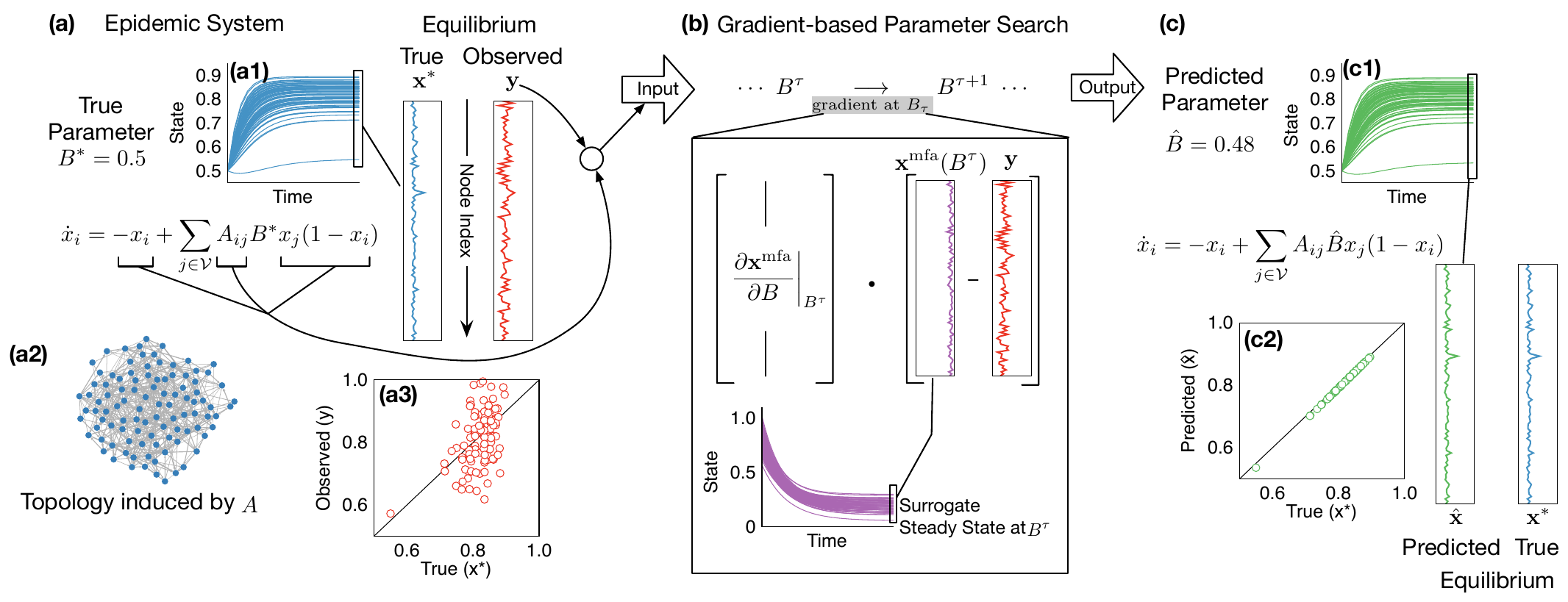} 
\caption{\label{fig:problem_setup}
Workflow Illustration on Epidemic Dynamical Systems.
(a) Ground Truth Epidemic System and Observed Equilibrium: 
(a1) Ground truth steady states ($\mathbf{x}^*$) simulated using coupled ODEs with the ground truth parameter $B^*=0.5$.
(a2) Network topology induced by the adjacency matrix $A$, generated using the Erd\H{o}s R\'enyi model ($N=100$, average degree $12$).
(a3) Comparison of observed states ($\mathbf{y}$) with true steady states. Observations are sampled from a normal distribution with mean $x_i^*$ and standard deviation $10\%\lvert x_i^* \rvert$.  
(b)  Optimization of Surrogate Objective Function (Eq.~(\ref{eq:obj1})): 
The optimization process takes as input mathematical expressions representing the self-dynamics and interaction terms, the degree of each node, and the observed steady states $\mathbf{y}$. A gradient-based parameter search algorithm is employed for optimization. Note that numerical integration is performed for each query parameter $B^{\tau}$ during the optimization. The detailed computation of the gradients is provided in Appendix~\ref{sec:grad}. 
(c) Inferred Parameter $\hat{B}$ and Predicted Equilibrium: 
(c1)  State evolution of the ODE with the inferred parameter $\hat{B}$, resulting in the predicted equilibrium $\hat{\mathbf{x}}=\mathbf{x}(\hat{B}) $.
(c2) Close alignment between predicted steady states and ground truth.
} 
\end{figure*}

\section{\label{sec:surrogateODE}Parameter Estimation via a Surrogate Objective Function}

\subsection{Problem Setup}

We model a networked dynamical system using the framework in~\cite{barzel2013universality}. The networked system is a graph \math{\mathcal{G}}, with nodes $\{1,\ldots,N\}$ and adjacency matrix $A$, where $A_{ij}=1$ if $(i,j)$ is an edge in~\math{\mathcal{G}}. We denote the in-degrees as $\degree^{\text{in}} = A\mathbf{1}  $ and out-degrees as  $\degree^{\text{out}}=\mathbf{1}^{\top}A$, where $\mathbf{1}$ is a vector consisting of all ones. For undirected networks, we have $\degree^{\text{in}}=\degree^{\text{out}}=\degree$. Each node \math{i} has a time dependent scalar state \math{x_i(t)}. The states follow a general form of coupled ODE,
\begin{align}
\dot{x}_{i}=F_{i}(\mathbf{x},\param,A)=
f(x_{i},\boldsymbol{\theta}) + \sum_{j=1}^{N} A_{ij}g(x_{i},x_{j},\boldsymbol{\theta}).\label{eq:ode}
\end{align}
The functions $f$ and \math{g} give the intrinsic and interaction forces. The interaction is modulated by $A_{ij}$. The functions \math{f,g} and the graph topology induced by the adjacency matrix \math{A} are given, while the ODE parameters \math{\boldsymbol{\theta}} are unknown and must be learned. We study cooperative dynamics,  where each neighbor's interaction positively affects each node's survival~\cite{wu2023rigorous} ($\frac{\partial g}{\partial x_j} \geq 0$). Since $A_{ij}\geq 0$, nodes with more neighbors have higher equilibria. Table~\ref{tab:dyn} summarizes the three specific dynamics we use in our experiments. 

One standard approach to learn the parameters is iterative NLS. Let $\steady_i(\param)$ be the steady states of the system for parameter $\param$. NLS finds parameters \math{\boldsymbol{\theta}} that minimize the error
\begin{align}
\obj(\boldsymbol{\theta},\mathbf{y}) = \frac{1}{N} 
\lVert\Steady(\boldsymbol{\theta}) - \mathbf{y}\rVert^{2} =\frac1N\sum_{i=1}^N(\steady_i(\param)-y_i)^2. \label{eq:obj}
\end{align}
where $\mathbf{y}\in\mathbb{R}^{N}$ denote the observed steady states.
(Boldface is used for vectors and \math{\lVert\cdot\rVert} is the Euclidean norm.)

The optimization process employs a gradient-based algorithm, such as Conjugate Gradient Descent, to traverse the parameter space. This algorithm iteratively evaluates the objective function, moving to the next query parameter that reduces the objective function. At each query parameter, the objective function and its gradient are computed. This requires solving the ODEs to obtain the steady states corresponding to each parameter value tested. Our main contribution is an efficient method to compute an approximation to steady states at each query parameter, together with an empirical study of the effectiveness of our approach.

The problem setup is illustrated in Fig.~\ref{fig:problem_setup}. The system evolves under the ground truth parameter $\param^{*}$ to produce a time series of ground truth states ending in the ground truth steady states \math{x_i^{*}} (Fig.~\ref{fig:problem_setup}(b)). One only observes noisy estimates of the steady states, $y_{i} = x_{i}^{*} + \epsilon_{i} $, where $\epsilon_{i}$ are noise. Our goal is to use the observed noisy steady states \math{y_i} to learn a parameter  that reproduces the true steady states ${x}_i^{*}$. To summarize, our inputs are the network's topology $A$, the dynamical functions $f$ and $g$, and the observed steady states $\mathbf{y}$. The output is a parameter $\hat{\param}$ that minimizes our defined objective function $\mathcal{E}(\param,\mathbf{y})$. We aim for the estimated parameter $\hat{\param}$ to approximate the true parameter $\param^*$ and for the predicted steady states   $\hat{\Steady}=\mathbf{x}(\hat{\param})$ to closely align with the ground truth $\Steady^*$, surpassing the observational data's proximity, i.e., $\hat{\Steady}\approx \mathbf{x}^*$ and $\lVert\hat{\Steady} - \mathbf{x}^* \rVert\leq \lVert \mathbf{y} - \mathbf{x}^*\rVert$.

\begin{table*}
\caption{\label{tab:dyn}Summary of the three dynamics analyzed in this paper. The ecological dynamics models a plant network projected from plant-pollinator mutualistic relationships. $B$ represents the incoming migration rate. The second term describes the logistic growth with carrying capacity $K$ and Allee constant $C$. The third term stands for mutualistic interaction that saturate for large $x_i,x_j$~\cite{gao2016universal,kundu2022accuracy}. The gene regulatory dynamics is adapted from the Michaelis-Menten model~\cite{alon2006introduction,karlebach2008modelling}. The first term denotes degradation ($f=1$) or dimerization ($f=2$). The Hill coefficient $h$ represents the level of cooperation of the gene regulation~\cite{gao2016universal,wu2023rigorous}. In the epidemic process, node $i$ is susceptible ($0\leq x_i < 1$) or infected ($x_i=1$). Infected nodes spread the pathogen to their neighbors at rate $B>0$~\cite{wu2023rigorous}. 
}
\begin{ruledtabular}
\begin{tabular}{lll}
Dynamics & State &Formula   \\
\colrule
Ecology  & Biomass &$\dot{x}_{i} = B + x_{i}\left(1-\frac{x_{i}}{K}\right)\left(\frac{x_{i}}{C}-1\right) + \sum_{i=1}^{N} A_{ij} \frac{x_{i}x_{j}}{D + E x_{i} + H x_{j}}$\\  
Gene regulatory &Gene expression level &$\dot{x}_{i} =  - Bx_{i}^{f} + \sum_{j=1}^{N} A_{ij} \frac{x_{j}^{h}}{x_{j}^{h}+1}$\\
Epidemic & Probability of Infection& $\dot{x}_i=-x_{i} + \sum_{j=1}^{N}A_{ij}B (1-x_i)x_j$\\
\end{tabular}
\end{ruledtabular}
\end{table*}

\subsection{Mean-field Approach}

We define a steady state approximation $ {x}_{i}^{\text{mfa}}(\param)\approx \steady_i (\param)$ based on a decoupled version of Eq.~(\ref{eq:ode}) and seek the minimizer of the following surrogate objective function
\begin{align}
\obj^{\text{mfa}}(\boldsymbol{\theta},\mathbf{y}) =\frac1N\sum_{i=1}^N(\steady_i^{\text{mfa}}(\param)-y_i)^2. \label{eq:obj_mfa}
\end{align}
The derivation of $\steady_i^{\text{mfa}}$ uses a reduction of the $N$-dimensional ODEs  to a 1-dimensional mean-field equation described in~\cite{gao2016universal}. We summarize the reduction process as follows. The derivation holds for any feasible parameter and we temporarily drop the dependence on \math{\param}.

In a network with a low degree correlation, the neighborhoods of different nodes exhibit similarities. We assume that the impact received from these neighborhoods is identical for all nodes and consider the shared neighbor effect as a sum of interaction terms weighted by the relative out-degree  of each node. This can be represented as:
\begin{align} 
  \sum_{j=1}^{N} \frac{\delta_j^{\text{out}}}{\sum_{k}\delta_k^{\text{out}}} g(x_i, x_j) 
\end{align}
The definition of the global state is reasonable because, in connected graphs, each node's information propagates throughout the entire network, influencing every node. Furthermore, nodes with larger out-degrees have a higher impact on other nodes.

To simplify notation, we introduce an averaging linear operator $\mathcal{L}: \mathbb{R}^{N}\rightarrow \mathbb{R}$, which computes a degree-weighted average:
\begin{align}
\mathcal{L}(\mathbf{x})=\frac{\mathbf{1}^{\top}A\mathbf{x}}{\mathbf{1}^{\top}A\mathbf{1}}
\end{align}
By replacing the local averaging of neighbor effects with the global impact, we obtain:
\begin{align}
\dot{x}_i &= f(x_i) + \delta_i^{\text{in}} \left[ \sum_{j=1}^{N}\frac{A_{ij}}{\delta_i^{\text{in}}}g(x_i, x_j)\right]\\
&\approx f(x_i) + \delta_i^{\text{in}} \left[\sum_{j=1}^{N} \frac{\delta_j^{\text{out}}}{\sum_{k}\delta_k^{\text{out}}} g(x_i, x_j) \right] \\
&= f(x_i) + \delta_i^{\text{in}} \mathcal{L}(g(x_i, \mathbf{x}))
\end{align}
where $g(x_i,\mathbf{x}) = [g(x_i,x_1),\ldots,g(x_i,x_N)]$.
We further approximate the term $\mathcal{L}(g(x_i, \mathbf{x}))$ by $g(x_i, \mathcal{L}(\mathbf{x}))$:
\begin{align}
\dot{x}_i&\approx f(x_i) + \delta_i^{\text{in}} g(x_i, \mathcal{L}(\mathbf{x})) \label{eq:g(x,L(x))}.
\end{align}
This approximation is exact when $g(x_i,x_j)$ is linear in $x_j$. When $g(x_i, x_j)$ is concave (convex) in $x_j$, the approximation is an over-estimation (under-estimation).
In vector form, this system is
\begin{align}
\dot{\mathbf{x}}\approx f(\mathbf{x}) + \degree^{\text{in}} \circ g(\mathbf{x},\mathcal{L}(\mathbf{x})),
\end{align}
where \math{f(\mathbf{x})=[f(x_1),\ldots,f(x_N)]}, \math{ g(\mathbf{x},\cdot)=[g(x_1,\cdot),\ldots,g(x_N, \cdot)]}, and \math{\circ} denotes entrywise product between vectors.
Applying $\mathcal{L}$ on both sides reduces this system to one dimension,
\begin{align}
\mathcal{L}(\dot{\mathbf{x}}) &\approx \mathcal{L}[f(\mathbf{x})] +  \mathcal{L}[\degree^{\text{in}}\circ g(\mathbf{x},\mathcal{L}(\mathbf{x}))].
\end{align}
Since \math{\mathcal{L}} is linear,
\math{\mathcal{L}(\dot{\mathbf{x}})=\dot{\mathcal{L}(\mathbf{x})}}.
The following approximation assumes $f,g$ are linear and $\mathcal{L}$ roughly preserves multiplication.
\begin{align}
\dot{\mathcal{L}(\mathbf{x})}&\approx f(\mathcal{L}(\mathbf{x})) + \mathcal{L}( \degree^{\text{in}}) \mathcal{L}[g(\mathbf{x},\mathcal{L}(\mathbf{x}))] \\
&\approx f(\mathcal{L}(\mathbf{x})) + \mathcal{L}( \degree^{\text{in}}) g(\mathcal{L}(\mathbf{x}),\mathcal{L}(\mathbf{x}))\label{eq:before_1dmfa}.
\end{align}
Define a global effective state $x_{\text{eff}}=\mathcal{L}(\mathbf{x})$ and an effective scalar representation of topology $\beta = \mathcal{L}( \degree^{\text{in}})$. Then, from Eq.~(\ref{eq:before_1dmfa}), $x_{\text{eff}}$
approximately follows the dynamics
\begin{align}
\dot{x_{\text{eff}}} = f(x_{\text{eff}}) + \beta g(x_{\text{eff}},x_{\text{eff}}). \label{eq:1dmfa}
\end{align}
Using \math{x_{\text{eff}}} for $\mathcal{L}(\mathbf{x})$ in Eq.~(\ref{eq:g(x,L(x))}), we get the uncoupled ODE for \math{x_i}:
\begin{align}
\dot{x}_i &= f(x_i) + \delta_i^{\text{in}} g(x_i, x_{\text{eff}} ). \label{eq:mfa}
\end{align}

The derivation is based on the following assumptions:
\begin{inparaenum}[(i)]
\item A shared global impact can approximate the local neighbor effect of each node.
\item The functions $f$ and $g$ are approximately linear, which holds when $x_i$ and $x_j$ are close to their steady states.
\item The operator $\mathcal{L}$ roughly satisfies the approximation $\mathcal{L}(\mathbf{x}\circ\mathbf{y})\approx \mathcal{L}(\mathbf{x})\mathcal{L}(\mathbf{y})$.
\end{inparaenum}
Integrating the decoupled ODEs (Eq.~(\ref{eq:mfa})) is computationally cheap compared to solving coupled systems. The mean-field steady states are accurate for a variety of network dynamics and topologies, see~\cite{gao2016universal} for a discussion. 
Essentially, we replace distinct neighbor states with a global effective state shared by all nodes. The decoupled ODE decomposes the incoming strength of node $i$ into its in-degree and a constant neighbor interaction $g(x_i, x_{\text{eff}})$. This equation captures the highly correlated nature of degrees and states in the dynamics~\cite{gao2016universal,sanhedrai2022reviving}.

There are three steps to compute the steady states $\Steady(\param)$ at an arbitrary parameter $\param$. 
\begin{inparaenum}[(i)]
\item Integrate the 1-dimensional ODE~(Eq.~(\ref{eq:1dmfa})) with initial condition $\mathcal{L}(\mathbf{y})$ to get $\steady_{\text{eff}}$. This biases $\steady_{\text{eff}}$ toward the 
``right'' zero of \math{f(x,\param)+\beta g(x,x,\param)}.
\item Solve the uncoupled ODEs in Eq.~(\ref{eq:mfa}) with initial condition $\steady_{\text{eff}}$. Let $z_i$ denote the ultimate state value. 
\item Perform $k$ numerical integration steps on the   coupled 
ODEs~(Eq.~(\ref{eq:ode})) with initial condition $z_i$ to fine-tune the
mean-field steady states. 
\end{inparaenum}
When no fine-tuning steps are performed in step (iii) ($k=0$), we get our \emph{mfa} method in Algorithm~\ref{alg:mfa}, which is not only extremely efficient but only requires node-degrees and not full topology to estimate the parameters, since the scalar topology $\beta$ can be written as:
\begin{align}
\beta = \frac{\mathbf{1}^{\top}A\degree^{\text{in}}}{\mathbf{1}^{\top}A \mathbf{1}} = \frac{\degree^{\text{out}}\cdot \degree^{\text{in}}}{\degree^{\text{in}}\cdot \mathbf{1}}
\end{align}
Note that in \emph{mfa}, one only needs to solve  a decoupled ODE  for each distinct node-degree, which reduces the system dimension to $M$ ($M$ equals the number of unique degrees). An enhanced method \emph{mfa+} ($k>0$ in step (iii)) starts from the output of \emph{mfa} and simulates the exact ODEs to get more accurate steady states (Algorithm~\ref{alg:mfap}). 
The algorithms can be adapted to networks containing disconnected nodes. In this context, the zeros of the function $f(x,\param)$ represent the stable states of these isolated nodes.   

\begin{algorithm}[h]
\begin{algorithmic}[1]
  \Require    In-degree $\degree^{\text{in}}$, Out-degree  $\degree^{\text{out}}$, Observed Global Effective State $y_{\text{eff}}$, Parameter $\param$  
  \Ensure   Steady states $\Steady^{\text{mfa}} $     
   \State $\beta \leftarrow   {\degree^{\text{out}}\cdot \degree}/{\degree^{\text{in}}\cdot \mathbf{1}}$  
   \State $\steady_{\text{eff}}\leftarrow \textsc{SteadyState}\left( x\mapsto f(x,\param)+\beta g(x,x,\param),  y_{\text{eff}}\right)$   
   \State Initialize an empty array $\degree' \leftarrow [\ ]$ and a dictionary $\textit{id}\leftarrow \{\}$\\
    $j\leftarrow 1$ 
  \For{$i=1,\ldots,\text{length}(\degree^{\text{in}})$}
    \If \textbf{not} ($\delta_i^{\text{in}}$ in $\textit{unique-degree}$)  
       \State Append $\delta_i^{\text{in}}$ to the end of \textit{unique-degree}  
       \State  $\textit{id}[\delta_i^{\text{in}}] \leftarrow j$ 
       \State  $j \leftarrow j + 1$
    \EndIf  
  \EndFor
  \State $\mathbf{z}\leftarrow \textsc{SteadyState}\left( \mathbf{x}\mapsto f(\mathbf{x} )+\degree' g(\mathbf{x}, \steady_{\text{eff}}),  \steady_{\text{eff}}\mathbf{1} \right)$  
  \For{$i=1,\ldots,\text{length}(\degree^{\text{in}})$}{ 
    $\steady_i^{\text{mfa}}\leftarrow \mathbf{z}[id[\delta_i^{\text{in}}]]$
  } 
  \EndFor
\end{algorithmic} 
\caption{\textit{mfa}}\label{alg:mfa}
\end{algorithm}

\begin{algorithm}[h]
\begin{algorithmic}[1]
  \Require $n \geq 0 \vee x \neq 0$
  \Ensure $y = x^n$
  \State $y_{\text{eff}} \leftarrow \mathbf{1}^{\top}A\mathbf{y} / \mathbf{1}^{\top}A\mathbf{1}$  
  \State $\Steady^{\text{mfa}}\leftarrow \textit{mfa}(A\mathbf{1}, \mathbf{1}^{\top}\mathbf{A},y_{\text{eff}},\boldsymbol{\theta})$   
  \State $\Steady \leftarrow \textsc{SteadyState}( \mathbf{x}\mapsto [F_{1}(\mathbf{x},\param,A) ,\ldots,F_{N}(\mathbf{x},\param,A)],  \Steady^{\text{mfa}})$ 
\caption{\textit{mfa}+}\label{alg:mfap}
\end{algorithmic}
\end{algorithm}

\begin{figure*}[t]
    \centering
    \includegraphics[width=0.8\textwidth]{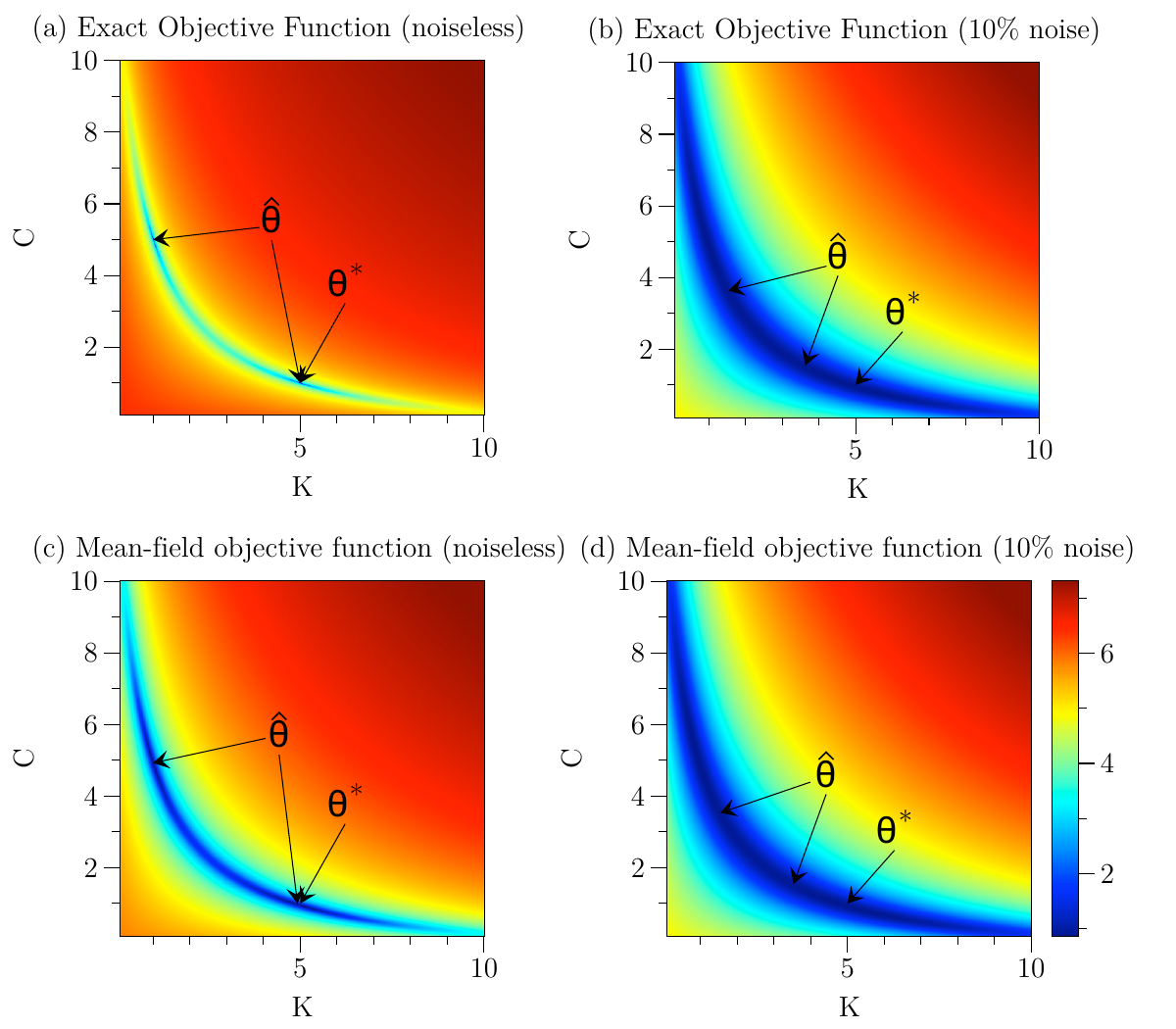}
    \caption{Loss landscapes for ecological dynamics over parameter space $\{[B^*,K,C,D^*,E^*,H^*]\ :\ K,C\in[0.01,10]\}$. The ground truth parameter used to generate $\mathbf{x}^*$ is $\param^*=[0.1,5,1,5,0.9,0.1]$. $\hat{\param}$ denotes the global minimum. The four figures correspond to different objective functions: (a) $\obj(\param,\mathbf{x}^*)$  , (b) $\obj(\param,\mathbf{y})$, (c) $\obj^{\text{mfa}}(\param,\mathbf{x}^*)$, and (d) $\obj^{\text{mfa}}(\param,\mathbf{y})$, where $y_i\sim \mathcal{N}(x_i^*, 10\%x_i^*)$. The ecological system is degenerate as $(K,C)=(5,1)$ and $(K,C)=(1,5)$ yield the same steady states, as manifested in the two basins of attractions in (a). Injecting noise to observation flattens the loss landscape around $\param^*$ and the global minimum deviates from the ground truth, making $\param^*$ ``unidentifiable''. Converging to either basin of attraction reproduces the ground truth states, which is why we use state error, not parameter error to evaluate learning. 
    }
    \label{fig:obj}
\end{figure*}
We also consider a brute-force method \emph{full} that evaluates $\Steady(\param)$ by simulating the ODE using the noisy states $\mathbf{y}$ as initial condition (Algorithm~\ref{alg:full}). \emph{Full} is a benchmark for comparing the mean-field \emph{mfa} method. Our empirical results in Section~\ref{sec:emp} show that the mean-field solver obtain similar steady states for large or near-homogeneous networks using far fewer computational resources. 
\emph{Mfa+} can be more efficient than \textit{full}, because fewer numerical integration steps are needed when starting from \emph{mfa}-steady states as opposed to starting from $\mathbf{y}$. 

The \textsc{SteadyState} function is utilized in the \textit{mfa}, \textit{mfa}+, and \textit{full} modules. It takes a user-specified ordinary differential equation (ODE), an initial condition, and a step size (optional) as inputs, providing the corresponding steady states. The ODE is iteratively evaluated until the sum of absolute derivatives of all nodes converges to zero. For the detailed implementation based on the forward Euler scheme, please see Appendix~\ref{sec:supp-ss}.

The computational complexity of these algorithms depends on the number of iterations (\#iterations) required to compute steady states and the runtime needed for a single evaluation of the differential equation. The \textit{mfa} algorithm exhibits a time complexity of $O(N \cdot \text{\#iterations})$, reflecting its efficiency in scenarios where nodes act independently.
In contrast, both the \textit{full} and \textit{mfa}+ algorithms face a more significant computational burden with a time complexity of $O(N^2\cdot\text{\#iterations})$. 

While we lack precise theoretical results on \#iterations, we seek to quantify it regarding network topology properties. In \textit{mfa}, the number of iterations required for convergence remains constant w.r.t. network size since node dynamics are entirely independent.
For the \textit{full} method, \#iterations is at least on the order of the network's diameter. To illustrate, consider an initial state where all nodes except one are in equilibrium. Achieving steady states is analogous to propagating the impact of a perturbation throughout the entire network. Each iteration executes a round of message passing to the nearest neighbors. Thus, the number of iterations required is at least the shortest path length between the two most distant nodes in the network. The concept of the number of iterations also bears a resemblance to the mixing time in Markov chains, which has a proven lower bound associated with the network's diameter~\cite{levin2017markov}.  Based on this analysis, the time complexity of the \textit{full} method on random networks is approximately $\Omega\left( {N^2\log N}/{\log \langle \degree \rangle}\right)$, while for square lattice networks, the time complexity is approximately $\Omega(N^{5/2})$. Here, $\langle \cdot \rangle$ represents the mean of the elements in the enclosed vector.

Regarding \textit{mfa}+, its running time is bounded by the numerical integration of the coupled system. Given that the initial condition is the mean-field equilibrium, which is already near the final solution, the number of iterations to execute the  coupled ODE should be less than \textit{full}. In comparison to \textit{full}, \textit{mfa}+ reaches the same equilibrium in less time by circumventing the initial propagation phase throughout the network.

Fig.\ref{fig:obj} illustrates the loss landscapes of the objective function (Eq.~(\ref{eq:obj})) and its mean-field version (Eq.~(\ref{eq:obj_mfa})). We utilize the \emph{full} method to compute $\steady_i(\param)$ and employ Algorithm \ref{alg:mfa} to calculate $\steady_i^{\text{mfa}}(\param)$. Fig.~\ref{fig:obj}(a) and (b) correspond to the exact objective function when the observation $\mathbf{y}$ is noiseless ($\mathbf{y}=\mathbf{x}^*$) and when it is blurred by 10\% Gaussian noise ($y_i\sim \mathcal{N}(x_i^*, 10\%x_i^*)$), respectively. The corresponding surrogate surfaces are shown in Fig.~\ref{fig:obj}(c-d), which approximate the shape and minimum of the true surface regardless of the noise level. This indicates that optimizing  the surrogate error would yield approximately the same solution. 

We show experimental results on the accuracy of parameters learned by minimizing $\mathcal{E}$ compared to $\mathcal{E}^{\text{mfa}}$ on various networks in section~\ref{sec:emp}. However, instead of focusing on parameter error, we assess the learning outcome based on the ability of the learned parameters to reproduce the ground truth steady states $\mathbf{x}^*$. This choice is motivated by two factors. Firstly, in the case of noiseless data and a single global minimum exists, reproducing the steady states is equivalent to recovering the ground truth parameters. However, when multiple global minima are present, as illustrated in Fig.~\ref{fig:obj}(a), the optimization process may lead to convergence in different basins of attraction, all of which equally reproduce the steady states. Secondly, when the observed data is noisy, as depicted in Fig.\ref{fig:obj}(b), the minimizer deviates from the ground truth parameter. Therefore, evaluating the learned parameters based on their ability to reproduce the noiseless ground truth steady states provides a more meaningful measure of success, as it focuses on the underlying dynamics rather than the specific parameter values.

\section{\label{sec:convergence}NLS Convergence Analysis}

We now investigate the asymptotics of NLS for denoising observed steady states given zero-mean noise. Theorem 3.2 in~\cite{magdon2001learning} proves that under zero-mean and independent noise with variance $\sigma^2$, the expected error of the learned steady states is $O(\sigma^2/N)$. If however, the variance of noise at node $i$ depends on $x_{i}^*$, which is affected by the steady states of $i$'s neighbors then the noisy states are interdependent. 

To address this interdependence, we analyze an idealized network model that breaks the dependence between nodes while approximately preserving the steady states of the original network. Specifically, we assume nodes with a fixed degree $k$ are sampled from a   $k$-regular network (undirected), where every node has degree $k$.

\begin{definition}
  \textup{Block-regular Network (BRN).} A block-regular network, denoted as $\mathcal{B}$, consists of $M$ disconnected blocks $\mathcal{B}_{1},\ldots,\mathcal{B}_{M}$. Each block $\mathcal{B}_{i}$ has  $N_{i}$ nodes sampled from a regular network with degree $r_{i}$ and $r_{i}\neq r_{j}$ for $i\neq j$. $r_{i}$ is called the block degree of $\mathcal{B}_{i}$.
\end{definition}

Theorem 2 in~\cite{jiang2020true} shows that one basin of attraction for states of a $k$-regular network is $\forall i \in \mathcal{V}, x_i^*=z$, where $z$ is a nonnegative root of 
\begin{align}
  f(x,\param^*) + k g(x,x,\param^*),\label{eq:regular}
\end{align} and $\mathcal{V}$ is the set of nodes in $\mathcal{G}$.
Hence, each node's steady state can be obtained from an independent equation.

A BRN is defined by a sequence of regular blocks of specified sizes and degrees. Analyzing nodal behavior stratified by degree has been adopted in~\cite{pastor2001epidemic} to study epidemic models. Fig.~\ref{fig:brn_vs_general} provides additional evidence, where network steady state can be approximated by the associated BRN. This means analysis of BRNs provides insight into general networks.

\begin{figure*}[t]
    \centering
    \includegraphics[width=0.9\textwidth]{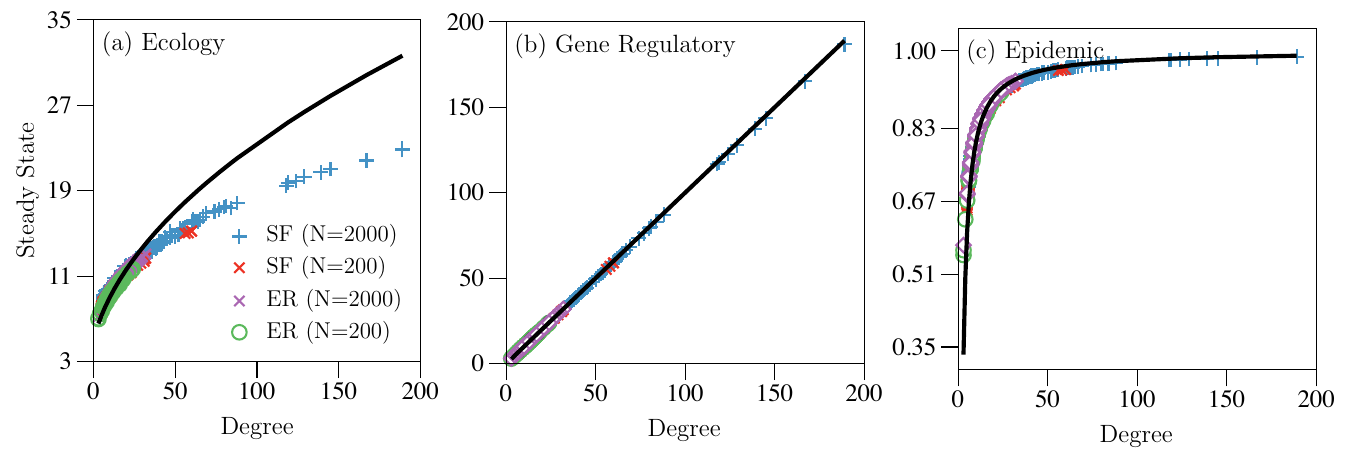}
    \caption{Ground truth versus BRN-approximated steady states. The steady states calculated by BRNs are represented by the black line, while the actual equilibria are depicted as dots.  To compute the BRN state, we assume nodes with the same degree are sampled from a regular network with degree $\delta$ and the initial condition is a high state, the steady state of degree-$\delta$ nodes converges to the non-negative root of $f(x,\param^*) + \delta g(x,x,\param^*)$ (black line) for three dynamics: (a) ecology, (b) gene regulatory, and (c) epidemic. The topology of network includes a scale-free (SF) network with $N=200$, average degree $\langle \degree\rangle=9.75$, and the power-law exponent of the degree distribution $\gamma=-2$; a scale-free network with $N=2000$, $\langle \degree\rangle=11.96$, and $\gamma=-2$; an Erd\H{o}s-R\'enyi   network with $N=200$, $\langle \degree\rangle=11.58$, and an ER network with $N=2000$, $\langle \degree\rangle=15.98$.      }
    \label{fig:brn_vs_general}
\end{figure*}

In the context of BRNs, the objective function for reverse engineering can be written as
\begin{align} 
\obj (\boldsymbol{\theta},\mathbf{y}) = \frac{1}{M} \sum_{i=1}^{M} \frac{1}{N_i}\sum_{\delta_j=r_i}  (\steady_j(\param) - y_j)^2. \label{eq:obj1}
\end{align}
The disconnected blocks are separate graphs, each having its own global effective state, which is the steady state of the 1-dimensional ODE. Consequently, the mean-field equation is equivalent to the 1-dimensional ODE for a regular block:
\begin{align}
  \dot{x}_i = f(x_i) + \delta_i g(x_i, x_{\text{eff}})  = f(x_i) + \delta_i g(x_i, x_i)
\end{align}
Therefore, the conclusion for NLS on BRNs applies to the mean-field objective.

NLS for BRNs asymptotically converges to the ground truth parameter provided that (i) the map from parameter
\math{\param} to steady states is continuous and (ii) the minimizer of the objective is unique for sufficiently many blocks or distinct degrees. By the Implicit Function Theorem, assumption (i) holds when the dynamics satisfies an invertibility condition.
\begin{lemma}
\label{lem:smooth_apx}
  Suppose $\mathcal{G}$ is a BRN, consisting of $r_{i}$-regular graphs, for $i=1,\ldots,M$.
  If 
  \begin{align}\forall i,\ \frac{\partial f}{\partial x_{i}}+ r_{i}\frac{\partial g}{\partial x_{i}}\rvert_{x_i=x_i^*}\neq 0,\label{lemma_condition}
  \end{align} 
  there exists a compact set $\Theta$ containing $\boldsymbol{\theta}^{*}$ 
  s.t. there is an invertible function mapping from parameters to steady states $\Steady(\param)$ with $\Steady(\param^*)=x_{i}^{*}$, i.e.,
  \begin{align} f(x_{i}^{*},\param^*)+
  r_{i} g(x_{i}^{*},x_{i}^{*},\param^*)=0.\end{align}
\end{lemma}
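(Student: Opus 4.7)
The plan is to apply the Implicit Function Theorem (IFT) to the steady-state equation block by block. For each block $i$ of the BRN, define the scalar residual
\begin{equation*}
H_{i}(x,\param) \;=\; f(x,\param) + r_{i}\, g(x,x,\param),
\end{equation*}
so that the regular-graph reduction in Eq.~(\ref{eq:regular}) becomes $H_{i}(x_{i}^{*},\param^*)=0$. The hypothesis~(\ref{lemma_condition}) of the lemma,
\begin{equation*}
\left.\frac{\partial f}{\partial x_{i}}+r_{i}\frac{\partial g}{\partial x_{i}}\right|_{x_{i}=x_{i}^{*}}\neq 0,
\end{equation*}
is precisely $\partial_{x}H_{i}(x_{i}^{*},\param^*)\neq 0$, which is the non-degeneracy hypothesis IFT requires.

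First I would invoke IFT for each $i=1,\ldots,M$. Assuming $f,g$ are $C^{1}$ in both arguments, this yields an open neighborhood $U_{i}\ni\param^{*}$ and a unique $C^{1}$ function $\phi_{i}:U_{i}\to\mathbb{R}$ with $\phi_{i}(\param^{*})=x_{i}^{*}$ and $H_{i}(\phi_{i}(\param),\param)\equiv 0$ on $U_{i}$. Since every node in block $i$ shares degree $r_{i}$, Theorem~2 of~\cite{jiang2020true} says they share this steady state, so $\phi_{i}(\param)$ is the common equilibrium for all nodes of degree $r_{i}$ at parameter $\param$. Next I would pick $\Theta$ to be any compact neighborhood of $\param^{*}$ contained in the finite intersection $\bigcap_{i=1}^{M}U_{i}$, which is open and nonempty. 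On $\Theta$ the assembled map $\param\mapsto\Steady(\param)=(\phi_{1}(\param),\ldots,\phi_{M}(\param))$ is a well-defined continuous (indeed $C^{1}$) function whose value at $\param^{*}$ is the ground-truth steady state, establishing existence and the base-point identity.

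The hard part is the word \emph{invertible}. Blockwise IFT alone only yields a single-valued smooth map $\param\mapsto\Steady(\param)$; invertibility additionally requires the Jacobian to have full column rank at $\param^{*}$, so that the inverse-function or constant-rank theorem supplies a local left-inverse $\Steady\mapsto\param$ and hence parameter identifiability. Implicitly differentiating $H_{i}(\phi_{i}(\param),\param)=0$ gives
\begin{equation*}
\frac{\partial\phi_{i}}{\partial\param}
= -\frac{1}{\partial_{x}H_{i}(x_{i}^{*},\param^{*})}\,\partial_{\param}H_{i}(x_{i}^{*},\param^{*}),
\end{equation*}
so invertibility amounts to the rows $\partial_{\param}H_{i}(x_{i}^{*},\param^{*})$, scaled by the nonzero factors $1/\partial_{x}H_{i}$, spanning the parameter space. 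This is precisely the ``sufficiently many blocks or distinct degrees'' requirement noted in the preceding paragraph: once the number $M$ of distinct block degrees is at least $\dim(\param)$ and the resulting sensitivities are linearly independent, I would conclude by invoking the inverse-function theorem to obtain a local inverse, then shrink $\Theta$ to a compact sub-neighborhood on which invertibility persists. I expect verifying this rank condition (which is dynamics- and topology-dependent) to be the main obstacle, while the IFT step itself is routine given~(\ref{lemma_condition}).
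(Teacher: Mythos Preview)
Your approach is exactly the paper's: the paper simply states ``By the Implicit Function Theorem, assumption (i) holds when the dynamics satisfies an invertibility condition'' and presents the lemma without further proof, so your blockwise IFT argument is the intended one, spelled out in more detail than the paper provides. Your observation that the word \emph{invertible} demands an extra rank condition beyond~(\ref{lemma_condition}) is correct and in fact more careful than the paper itself; the paper does not extract invertibility from the lemma's hypotheses but instead treats identifiability as a separate condition~(ii), for which it supplies only numerical evidence in Appendix~\ref{sup:brn} (finding specific block-degree sequences for which the parameter is pinned down).
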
 
We assume that Eq.~(\ref{lemma_condition}) holds, thereby establishing condition (i). To demonstrate condition (ii), we provide numerical evidence in the appendix. 
The following corollary follows from the asymptotic properties of nonlinear least squares~\cite{jennrich1969asymptotic}.
\begin{corollary}
\textup{(Consistency).} Assume the block degrees $r_{1},\ldots,r_{M}$ is a bounded sequence converging completely to some distribution function $\mathrm{F}(\delta)$. Then
$\hat{\boldsymbol{\theta}} \rightarrow  \boldsymbol{\theta}^{*}$ as $M\rightarrow \infty$.
\end{corollary}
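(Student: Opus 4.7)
The plan is to reduce the corollary to a direct application of Jennrich's consistency theorem~\cite{jennrich1969asymptotic}, which requires three ingredients: compactness of the parameter space, a continuous and identifiable regression function, and weak convergence of the empirical design distribution. In the BRN setting every node in block $i$ shares the degree $r_i$, so by Lemma~\ref{lem:smooth_apx} its steady state depends only on $r_i$ and $\param$. Writing $\phi(r_i, \param) := \steady_j(\param)$ for any $j$ with $\delta_j = r_i$, the objective in Eq.~(\ref{eq:obj1}) becomes a weighted empirical average of squared residuals indexed by the block degrees, which is precisely the NLS form covered by Jennrich's theorem, with $\phi$ playing the role of the regression function and the $r_i$ the role of design points.

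Next, I would verify the three hypotheses on $\phi$. Compactness of $\Theta \ni \param^{*}$, together with continuity of $\phi(r_i, \cdot)$, is supplied by Lemma~\ref{lem:smooth_apx}. Identifiability is assumption (ii), supported by the appendix's numerical evidence: for a sufficiently rich collection of distinct block degrees, the map $\param \mapsto (\phi(r_1, \param), \ldots, \phi(r_M, \param))$ is injective in a neighborhood of $\param^{*}$. Combined with the hypothesis that the empirical degree distribution $\hat{\mathrm{F}}_M$ converges completely to $\mathrm{F}$, this implies that the limiting objective $\bar{\obj}(\param) = \int (\phi(\delta,\param) - \phi(\delta,\param^{*}))^{2} \, d\mathrm{F}(\delta) + \sigma^{2}$ is uniquely minimized at $\param^{*}$ over $\Theta$.

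I would then establish uniform convergence of $\obj(\param, \mathbf{y})$ to $\bar{\obj}(\param)$ on $\Theta$. Within each block, $\frac{1}{N_i}\sum_{\delta_j = r_i}(\phi(r_i, \param) - y_j)^{2} \to (\phi(r_i, \param) - \phi(r_i, \param^{*}))^{2} + \sigma_i^{2}$ by the law of large numbers, using independence of noise conditional on the block. The outer average over $i$ converges because $\hat{\mathrm{F}}_M \Rightarrow \mathrm{F}$ completely; continuity of $\phi(\delta, \cdot)$ and compactness of $\Theta$ lift pointwise to uniform convergence via a standard stochastic equicontinuity argument, after which the argmin-continuity theorem (e.g., van der Vaart, Thm.~5.7) delivers $\hat{\param} \to \param^{*}$.

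The main obstacle is handling the within-block dependence of the observations. Because BRN blocks are disconnected, across-block independence is immediate, but the variance of $\epsilon_j$ generically depends on $x_j^{*}$, and in a genuine $r_i$-regular graph the noises attached to different nodes share underlying graph structure. Establishing—rather than merely assuming—that the within-block correlations are weak enough for the block-level LLN to apply as $N_i \to \infty$ is the delicate step; once this is verified, the remainder is a routine invocation of Jennrich's theorem.
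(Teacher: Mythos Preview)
Your proposal is correct and follows the paper's route exactly: the paper does not give a standalone proof but simply states that the corollary ``follows from the asymptotic properties of nonlinear least squares~\cite{jennrich1969asymptotic}'', after Lemma~\ref{lem:smooth_apx} supplies continuity and compactness and the appendix supplies identifiability numerically. Your flagged obstacle about within-block dependence dissolves in the BRN construction itself, since every node in an $r_i$-regular block shares the identical steady state $\phi(r_i,\param^{*})$, so noise whose variance depends on $x_j^{*}$ is automatically i.i.d.\ within each block and the block-level law of large numbers applies without further work.
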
 

The asymptotic properties of least-squares estimators allow us to quantify the denoising property on BRN. In particular, Theorem 3.2 in~\cite{magdon2001learning} proves that the expected mean squared error is bounded by the average variance of the observed data points divided by the total number of data points.  
\begin{corollary}\label{prop:denoise}
For a Block-regular Network (BRN) consisting of $M$ blocks with sizes $N_i$, where each node has i.i.d. zero-mean noise with variance $\sigma_j^2$, the expected state error $\mathbb{E}[\obj(\hat{\boldsymbol{\theta}})]$ is on the order of $O\left(\frac{\overline{\sigma^2}}{N}\right)$, where $\overline{\sigma^2}$ is defined as the average noise variance across all nodes in the network and $N$ represents the total number of nodes in the network, given by $N = \sum_{i=1}^{M} N_i$. 
\end{corollary}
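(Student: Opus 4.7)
The plan is to reduce the BRN denoising problem to a standard nonlinear least squares (NLS) fit over $N$ independent scalar observations and then invoke the asymptotic denoising bound of Theorem~3.2 in~\cite{magdon2001learning}. First, I would exploit the disconnectedness of the BRN: because the blocks $\mathcal{B}_1,\ldots,\mathcal{B}_M$ share no edges, the steady states of different blocks are completely decoupled, and within block $\mathcal{B}_i$ every node has the same ground-truth steady state $z_i^{*}$, determined by the scalar equation $f(z,\param^{*})+r_i g(z,z,\param^{*})=0$ from Eq.~(\ref{eq:regular}). Hence each observation reduces to $y_j = z_{b(j)}^{*} + \epsilon_j$, where $b(j)$ is the block containing node $j$ and the $\epsilon_j$ are mutually independent with mean $0$ and variance $\sigma_j^2$.

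Under this reduction, the objective in Eq.~(\ref{eq:obj1}) is exactly the NLS loss for fitting the parametric family $\{z(r_i,\param)\}_{i=1}^{M}$ to $N$ independent scalar data points. Lemma~\ref{lem:smooth_apx} provides a compact neighbourhood $\Theta\ni\param^{*}$ on which $\param\mapsto z(r_i,\param)$ is continuously differentiable, and Corollary~1 provides consistency $\hat{\param}\to\param^{*}$ as $M\to\infty$, so eventually $\hat{\param}$ lies in a neighbourhood of $\param^{*}$ on which the first-order expansion of $z(r_i,\cdot)$ is accurate. Theorem~3.2 of~\cite{magdon2001learning} then applies directly: the expected mean squared error between the fitted values $z(r_i,\hat{\param})$ and the true values $z_i^{*}$ is bounded, up to a constant depending only on $\dim(\param)$, by $N^{-2}\sum_j \sigma_j^2 = \overline{\sigma^2}/N$. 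Because the fitted steady state is constant across each block, weighting per-block contributions by $N_i/N$ as in Eq.~(\ref{eq:obj1}) does not change the order, giving $\mathbb{E}[\obj(\hat{\param})] = O(\overline{\sigma^2}/N)$.

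The main obstacle will be tightly verifying the regularity conditions of Theorem~3.2 inside the BRN geometry. Two points need care. First, \emph{identifiability}: the Magdon--Ismail bound requires a locally unique minimizer of the population objective, which the paper only certifies numerically as assumption~(ii) preceding Corollary~1; I would strengthen Lemma~\ref{lem:smooth_apx} with an explicit full-rank hypothesis on the Jacobian of $\param\mapsto(z(r_1,\param),\ldots,z(r_M,\param))$ at $\param^{*}$, or equivalently require enough distinct block degrees ($M\geq\dim(\param)$) so that the Gauss--Newton Hessian is invertible. Second, \emph{heteroscedasticity}: the original Magdon--Ismail statement assumes a common noise variance, so I would extend the residual analysis by replacing $\sigma^{2}$ with the average $\overline{\sigma^{2}}$ and verifying that the cross terms in the Taylor expansion of $\obj$ about $\param^{*}$ vanish in expectation because the $\epsilon_j$ are independent and zero-mean, leaving only a diagonal contribution that sums to $N\,\overline{\sigma^{2}}$.
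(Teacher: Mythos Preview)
Your proposal is correct and follows essentially the same approach as the paper: the paper does not give a standalone proof of this corollary but simply states it as a direct consequence of Theorem~3.2 in~\cite{magdon2001learning}, applied to the BRN setting where nodal steady states decouple blockwise into independent scalar observations. Your write-up is in fact more detailed than the paper's, since you spell out the reduction via Eq.~(\ref{eq:regular}), the role of Lemma~\ref{lem:smooth_apx} and Corollary~1 in securing the regularity hypotheses, and the identifiability and heteroscedasticity caveats that the paper only acknowledges informally (as assumption~(ii) and the phrase ``average variance'').
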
  
 
For general networks, this analysis on BRNs suggests that as the number of nodes in the network increases, the learned steady states from the NLS-inferred parameters asymptotically converge to the true steady states, provided there is enough degree heterogeneity in the network. Indeed, this quantitative denoising behavior is confirmed by our empirical studies on general (non-BRN) networks. For the BRN analysis to carry over to general networks, one must still have continuity of steady states w.r.t. parameters and identifiability of \math{\param^*}. One can argue (deferred to Appendix~\ref{sup:brn}) that if these properties hold for a BRN that is sufficiently ``close'' to the real network, then the properties will continue to hold for the real network.

\section{\label{sec:emp}Numerical Results}
We assess the effectiveness of our method by measuring the deviation between the predicted steady states obtained from the learned parameters and the true steady states. 
To make a standardized comparison across different networks with varying steady-state ranges, we use the mean absolute percentage error (MAPE)
\begin{align}
\frac{1}{N}\sum_{i=1}^{N} \left\lvert\frac{\steady_i(\hat{\param}) - x_i^*}{x_i^*}\right\rvert.
\end{align}
As discussed in Section~\ref{sec:surrogateODE}, the minimizer of the noisy objective function is not necessarily the true parameter. Hence, we report the extent to which the learned parameters recover the true steady states.

We use the same ground truth parameters and initial conditions as in previous studies~\cite{gao2016universal,jiang2020true}. For ecological networks, the true parameters are set to $B=0.1, K=5,C=1,D=5,E=0.9,H=0.1$. The true parameters for gene regulatory networks are $B=f=1,h=2$, and for epidemic networks, $B=0.5$. The initial conditions of the states are $\forall\ i\in \mathcal{V},\ x_i = 6$ for ecological and gene regulatory dynamics and $\forall\ i\in \mathcal{V},\ x_i =0.5$ for epidemic networks.

Table~\ref{tab:data} lists the network data used in our experiments. The networks include two mutualistic networks (Net8, Net6)~\cite{gao2016universal}, two transcription networks of Saccharomyces cerevisiae (TYA, MEC)~\cite{gao2016universal}, a human contact network (Dublin)~\cite{rossi2015network}, and an email communication network (Email)~\cite{guimera2003self,kunegis2013konect}. These networks are unweighted and undirected. We analyze two types of synthetic networks: Erd\H{o}s-R\'enyi (ER) and scale-free (SF). Their degree distribution follows Poisson and Power law distribution, respectively.

\begin{table}[ht]
\caption{\label{tab:data} The real network data analyzed in the paper. For each network, we show its number of nodes and edges, dynamics, and the mean ($\langle \mathbf{x}^{*}\rangle$) and standard deviation ($\sigma$) of the ground truth steady states.  } 
\begin{ruledtabular}
\begin{tabular}{lllll} 
Net & Dynamics & \# nodes & \# edges  &  $\langle \mathbf{x}^{*}\rangle \pm \sigma$ \\\colrule
Net8   & Ecology    & 97   & 972  & 11.4 $\pm$ 2.7 \\
Net6   & Ecology    & 270  & 8074 &  9.5 $\pm$ 0.74\\
TYA    & Regulatory & 662  & 1062 &2.8 $\pm $ 4.3\\
MEC    & Regulatory & 2268 & 5620 &  11.48 $\pm$ 3.36 \\
Dublin & Epidemic   & 410  & 2765 & 0.78 $\pm$ 0.14\\
Email  & Epidemic   & 1133 & 5451 &  0.82 $\pm$ 0.05   
\end{tabular} 
\end{ruledtabular}
\end{table}

We adopt Conjugate Gradient Descent to perform parameter searching. To compute steady states, we run the Runge-Kutta4 scheme implemented by running  ODEINT~\cite{virtanen2020scipy} of Scipy until the $L_1$ norm of the absolute derivative ($\lvert d \Steady(\param)/dt\rvert_1$) is close to zero.  
Our implementation for all experiments and the network data can be accessed online~\footnote{\url{https://github.com/dingyanna/param_estimation.git}}.

\subsection{Efficiency} 
We compare the performance of two variants of our method and the {\em full} algorithm using real networks in Table~\ref{tab:efficiency} and Fig.~\ref{fig:runtime}. We generate observations from a normal distribution with a mean $x_i^*$ and a standard deviation $0.1\lvert x_i^*\rvert$, with the initial parameter guess   randomly sampled from a uniform distribution. Our {\em mfa} method demonstrated superior efficiency to the other approaches while maintaining a slight loss of accuracy for ecological and epidemic networks. Furthermore, our enhanced variant, {\em mfa}+, achieved comparable accuracy to the {\em full} algorithm in a slightly shorter time frame.  
When applied to ecological networks, optimizing the {\em full} objective with the initial parameter guess obtained from {\em mfa} achieves the same performance as directly optimizing the {\em full} objective, but in less than one-tenth of the time.

The accuracy of the mean-field objective function approximation relies on the dynamics and topology of the system. The surrogate objective function effectively captures the exact objective for networks characterized by linear dynamics and homogeneous degree distributions. Consequently, the relative steady-state error is lower for ecological and epidemic networks than for gene regulatory networks.

\begin{table}[h]
\centering
\caption{This table presents the runtime (in seconds) and mean absolute percentage error of steady states (in \%). We evaluate three distinct parameter estimation algorithms, each employing different methodologies to compute steady states within the objective function. The reported values represent the mean and standard deviation of the corresponding metrics.
}
\label{tab:efficiency}
\begin{ruledtabular}
\begin{tabular}{llccc}
 \multirow{2}{*}{Net}   &  \multirow{2}{*}{Metric} & \multicolumn{3}{c}{Method}  \\
 & & {\em full} & {\em mfa+} & {\em mfa} \\
  \colrule
 \multirow{2}{*}{Net8} & Runtime &  $267\pm16$    & $221  \pm  25 $ & $29 \pm   1 $  \\   
  & MAPE & $2.51 \pm 0.21 $ & $2.39\pm 0.25$ & $2.85\pm 0.16$ \\
  \colrule
  \multirow{2}{*}{Net6}  & Runtime   & $8558 \pm 717$ & $6516 \pm 1112 $ & $107\pm 7 $  \\
  & MAPE & $1.6\pm 0.14 $ & $1.63 \pm 0.16 $ & $2.7\pm 0.11$  \\ 
  \colrule
  \multirow{2}{*}{TYA} & Runtime    & $369\pm 21 $ & $373\pm 32$ & $67\pm 6$  \\ 
  & MAPE & $3.53\pm 0.29 $ & $3.68\pm 0.35$ & $8.53 \pm 0.16 $\\ 
   \colrule
  \multirow{2}{*}{MEC} & Runtime   & $2456\pm 188$ & $1161\pm 196$ & $97\pm 8$  \\    
  & MAPE & $4.76\pm 0.44$& $4.42\pm0.51$ & $6.45\pm 0.2$\\
  \colrule
   \multirow{2}{*}{Dublin}& Runtime  & $11.77 \pm 0.82$ & $10.49  \pm 0.92 $ & $1.09 \pm 0.03 $  \\  
   & MAPE & $0.47\pm0.061$& $0.47\pm0.061$&$2.42\pm 0.11$  \\  
  \colrule
 \multirow{2}{*}{Email} & Runtime   & $56.08  \pm 4.33 $& $68.52  \pm 4.01 $& $1.43 \pm 0.05 $  \\  
 & MAPE  & $0.33\pm 0.04$ & $0.33\pm 0.04 $ & $4.01\pm 0.08 $ \\
\end{tabular}
\end{ruledtabular}
\end{table} 
\begin{figure}[h]
  \centering
  \includegraphics[width=0.6\textwidth]{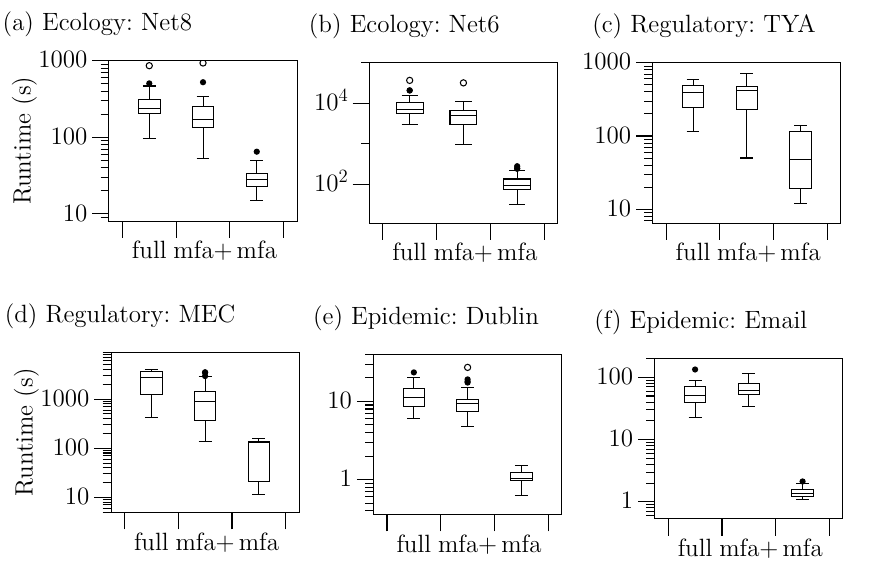}
  \caption{
    The runtime performance of three distinct parameter estimation methods was evaluated across six real-world networks. To visualize these results, we employed box-and-whisker plots using the Tukey method. The box represents the interquartile range (IQR), and a line inside the box marks the median. The whiskers extend 1.5 times the IQR from the edges of the box, and any outliers are also displayed. 
   }
  \label{fig:runtime}
\end{figure}

To assess the computational efficiency, we compare the \textit{mfa} and \textit{full} methods by simulating  steady states for ER networks of various sizes with a fixed average degree, using initial conditions and parameters defined earlier. The speedup, computed as the runtime of \textit{full} divided by that of \textit{mfa}, approximately scales at $O(N)$ for all three dynamics, with a greater scaling constant observed for ecological dynamics (Fig.~\ref{fig:speedup}). 
This finding highlights the notable computational efficiency of \textit{mfa}, particularly for large-scale networks, as it decouples the ODEs and accelerates the parameter estimation process.

\begin{figure}[h]
  \centering
  \includegraphics[width=0.4\textwidth]{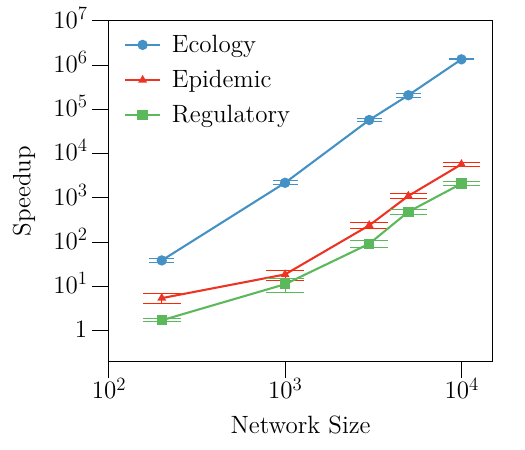}
  \caption{
    The speedup of \textit{mfa} compared to \textit{full} in computing a single steady state of Erd\H{o}s-R\'enyi networks with various sizes. The networks have a fixed average degree 12 and vary in size, with the sizes of 200, 1000, 3000, 5000, and 10000. Speedup is computed as the ratio of the runtime of the \textit{full} method to that of the \textit{mfa} method. Each data point represents the mean result, with error bars depicting the standard deviation from 43 runs (for regulatory and epidemic) and 5 runs (for ecology).
   }
  \label{fig:speedup}
\end{figure}

\begin{figure*}[t]
  \centering
  \includegraphics[width=\textwidth]{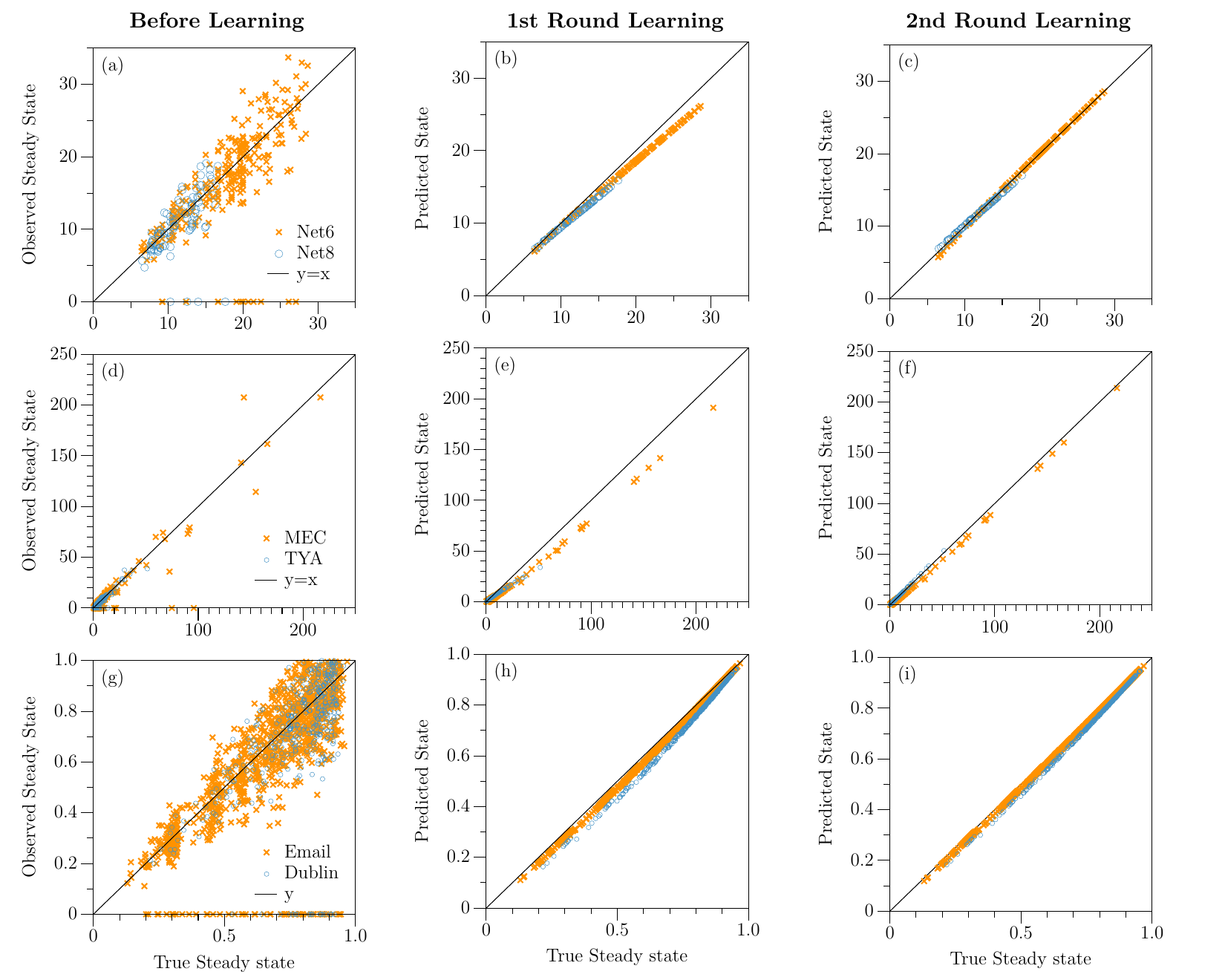}
  \caption{This figure shows the estimated states versus true steady states for six real networks (ref. Table~\ref{tab:data}) governed by ecological (a-c), gene regulatory (d-f), and epidemic (g-i) dynamics, where parameters are learned given 5\% states mismeasured to zero and other states with Gaussian noise of level 13\%. We plot the estimated states before learning (a,d,g), after the first run of optimization using the surrogate surface $\mathcal{E}^{\text{mfa}}$ (b,e,h), and after the second run of learning via the {\em mfa}+ method to compute steady states (c,f,i). The first run of learning identifies the nodes that are completely mismeasured. Then, a new objective function summing over non-mismeasured states is optimized to locate a new parameter, generating the steady states after the second round of learning. 
   }
  \label{fig:adversarial}
\end{figure*}

\subsection{Denoising Observed Steady States} 
We assess the performance of our methods under various types and levels of data distortion. Specifically, we consider Gaussian noise, where the observations are generated according to $y_i \sim \mathcal{N}(x_i^*, \epsilon \lvert x_i^* \rvert)$ ($\epsilon\in[0,0.3]$).
We evaluate the effectiveness of the \textit{mfa} method on ER and SF networks. For ER networks, we observe that the predicted state error remains smaller than the observation error even as the noise level increases from 0 to 30\%. However, for scale-free networks, the steady-state error is reduced compared to the observation error only when the noise level exceeds 3\% (see Fig.~\ref{fig:robustness}). In cases where the noise level is minimal, and the network is heterogeneous, the parameters learned from \textit{mfa} may require fine-tuning using an enhanced version such as \textit{mfa}+. It is worth noting that despite the presence of Gaussian noise, the approximation of the minima of the surrogate surface to the exact error surface remains largely unaffected, indicating a robust alignment between the two surfaces.  
Furthermore, our application of the mean-field approach on directed ER networks with uniformly distributed link weights $\mathcal{U}[5,6]$ yields successful network state recovery, as demonstrated in Fig.~\ref{fig:directed-weighted}. 
\begin{figure}[h]
    \centering
    \includegraphics[width=0.32\textwidth]{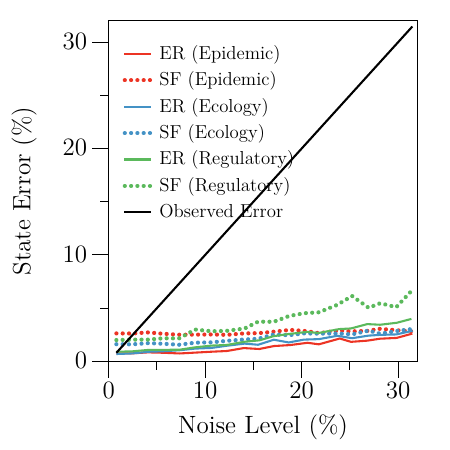}
    \caption{The relative steady-state prediction error is evaluated for synthetic networks under Gaussian noise. The black line represents the observation error caused by the noise. The solid lines correspond to an Erd\H{o}s-R\'enyi network with 200 nodes and 811 edges, while the dotted lines correspond to a scale-free network with 200 nodes and 784 edges. We minimize the mean-field objective function using an initial parameter guess sampled from $\mathcal{N}(\param^*, 20\% \lvert \param^*\rvert)$. Despite the increasing error in the observed equilibrium, the learned parameter yields states that closely approximate the ground truth steady states. The results are averaged over 100 runs.  }
    \label{fig:robustness}
\end{figure}

\begin{figure}[h]
    \centering
    \includegraphics[width=0.6\textwidth]{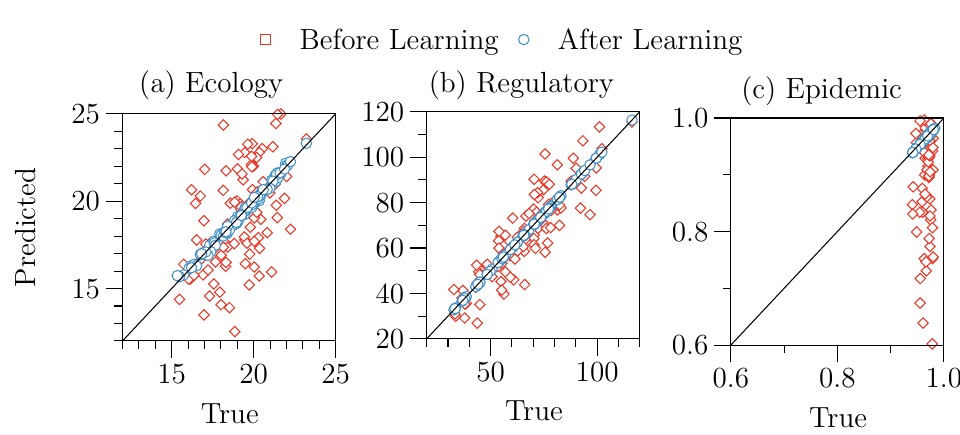}
    \caption{Denoising of steady states in directed and weighted ER networks with 100 nodes, an average degree of 12, and an average link weight of 5.5. The comparison between observed and true steady states is illustrated by the red square, while the inferred versus true equilibrium is depicted by the blue dots, with the former being simulated using the learned parameter.\label{fig:directed-weighted}}
     
\end{figure}

We simulate mismeasuring of data by setting 5\% of nodes to zero and adding Gaussian-distributed noise $\mathcal{N}(0,13\% x_{i}^*)$ to other nodes. The NLS with \textit{mfa} restores the states of the contaminated nodes (Fig.~\ref{fig:adversarial}). 
The vanishing states are recovered, and the relative state error is reduced from 16\% to 5\%. The result guides a second round of learning by signifying the mismeasured nodes whose measurement differs drastically from prediction. We identify such nodes as the set $\mathcal{V}'=\{i\in \mathcal{V} :\ \lvert (y_{i} - \steady_{i}) / (y_i+10^{-8}) \rvert \geq 1\} $. We discard the nodes in $\mathcal{V}'$ and optimize the following objective 
\begin{align}
\sum_{i \not\in \mathcal{V}'} (\steady_{i}(\param) - y_{i} )^2.
\end{align}
The second round of learning lowers the state error to 2\%.

\begin{table*}[t]
\caption{\label{tab:perturb} Relative steady-state error (\%) for perturbed networks $\mathcal{G}'$ obtained by rewiring $k\%$ edges, removing $k\%$ nodes or links uniformly at random from the original network $\mathcal{G}$. We consider four levels of perturbation: 10\%, 20\%, 30\%,  and 40\%.    The steady states of $\mathcal{G}'$ are computed using the parameter learned via optimizing $\mathcal{E}^{\text{mfa}}$ for the original network $\mathcal{G}$. The learned parameter can predict steady states after topology changes. When computing the average relative error for gene networks, we removed nodes whose ground truth states are smaller than 0.001. The average absolute error $\langle \lvert\hat{\mathbf{x}}-\mathbf{x}^*\rvert \rangle$  of these nodes with small steady states is 0.0008.
}
\begin{ruledtabular}
\begin{tabular}{lllll|llll|llll}
 \multirow{2}{*}{Net}& \multicolumn{4}{c}{Rewiring}& \multicolumn{4}{c}{Node Removal} & \multicolumn{4}{c}{Link Removal}\\
 \cmidrule(r){2-5}  \cmidrule(r){6-9} \cmidrule(r){10-13} 
 &10\% &20\% &30\% &40\%&10\% &20\% &30\% &40\%&10\% &20\% &30\% &40\%  \\
\hline
Net8   & 1.02  & 0.95 & 0.89 & 0.87 & 1.84 & 2.53 & 3.34 & 3.63 & 1.55 & 1.96 & 2.64 & 3.32\\
Net6   & 0.82 & 0.56 & 0.48 & 0.66 & 1.33 & 1.28 & 2.27 & 3.2 & 1.26 & 1.33 & 2.06 & 3.04\\
TYA    & 3.49 & 4.04 & 4.15 & 4.28 & 3.77 & 5.16 & 6.5 & 6.99 & 3.74 & 4.52 & 5.2 & 5.91  \\
MEC    & 3.09 & 2.97 & 2.85 & 2.6 & 3.53 & 7.37 & 9.94 & 15 & 3.8 & 4.6 & 6.06 & 8.16 \\
Dublin & 4.23 & 3.99 & 3.78 & 3.61 & 5.2 & 5.85 & 6.5 & 7.43 & 5.12 & 5.78 & 6.47 & 7.45 \\
Email  & 5.64 & 5.36 & 5.1 & 4.9 & 6.3 & 6.8 & 7.22& 8.4 & 6.4 & 6.77 & 7.23 & 7.95
\end{tabular}
\end{ruledtabular}
\end{table*}

\subsection{Predicting after Network Changes}
Genes interact with each other in complex ways that can vary based on changes in their physical connections. These changes in topology can cause variations in gene expression levels. Similarly, social connections that enable the spread of viruses are not fixed and can change over time. In ecological networks, introducing new species or the extinction of existing ones can also change the network's behavior. One of the primary objectives of studying network dynamics is to achieve control over the system and predict its future equilibrium states. Our work has contributed to this field by demonstrating the ability of learned parameters to predict new steady states in the face of network changes, thereby advancing the goal of controlling networks to achieve desired outcomes.

We consider three types of perturbations. (i) Random rewiring: 
we randomly delete $k\%$ of existing edges and connect $k\%$ of non-existing edges.  (ii) Node removal: we randomly remove $k\%$ of nodes and any incident edge. (iii) Link removal: we remove $k\%$ of edges uniformly at random. We use the parameter $\hat{\param}$ learned from noisy observation of level 10\% with method $\textit{mfa}$ to predict the steady states after topology changes. Using the same ground truth parameters, we simulate the benchmark steady states of the original network $\mathcal{G}$ and the perturbed one $\mathcal{G}'$. The mean relative error in approximating the steady states of $\mathcal{G}'$  is used to evaluate the prediction. 

The learned parameters successfully predict true steady states for $\mathcal{G}'$ (Fig.~\ref{fig:perturb}).
The results for four perturbation levels (10\%, 20\%, 30\%, 40\%) are shown in Table~\ref{tab:perturb}. The state error remains relatively constant and small for almost all situations. The parameters learned on a system can predict other networks governed by the same dynamics. The prediction is robust against perturbation that roughly preserves the degree distribution. 

\begin{figure}[h]
  \centering
  \includegraphics[width=0.32\textwidth]{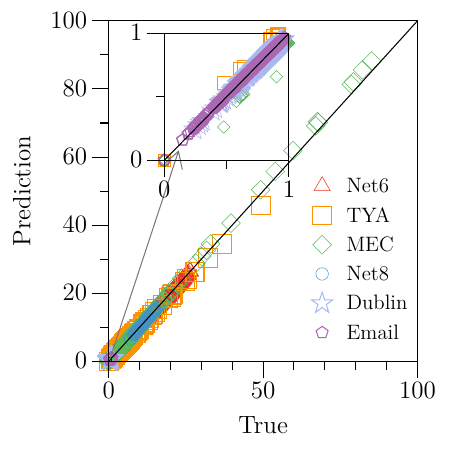}
  \caption{Predicting steady state versus true steady state after rewiring 10\% of the edges. The inferred states are computed using a parameter searching algorithm with the \textit{mfa} approach.
  The prediction  approaches ground truth steady state computed using $\param^*$ on the perturbed topology.}
  \label{fig:perturb}
\end{figure}

\subsection{Topology's Role in Learning Parameters}

We investigate the contribution of heterogeneity of degrees and number of nodes with the same degree to the denoising effect for BRNs and ER models. The performance is measured using an improvement in steady-state reconstruction:
\begin{align}
\text{Improvement} &= \frac{(\mathrm{MSE}(\mathbf{y}) - \mathrm{MSE}(\hat{\Steady}))}{\mathrm{MSE}(\hat{\Steady})}
\end{align}
where $\hat{\Steady}$ is the steady state simulated by the learned parameter $\hat{\param}$ and $\mathrm{MSE} $ is the mean squared error. The improvement ranges from $-1$ to $\infty$. If the value is in $[-1,0]$, the estimation is worse than the observation. A greater value reflects a larger denoising effect.

By Corollary~\ref{prop:denoise}, the state error of BRNs can be reduced by increasing the number of observed nodes through either the number of blocks $M$ or the block size $N_i$. 
We fix block size $N_i=20$ and tune the number of blocks $M$ from 5 to 25 (see blue curve in Fig.~\ref{fig:topology}). The block degrees are sampled uniformly from $U[5,185]$. Alternatively, we fix the block degree sequence at $[7, 13, 17, 23, 29]$ and vary block size $M$ from 20 to 100 (see red curve in Fig.~\ref{fig:topology}). For an ER network, we raise the network size by either increasing the number of distinct degrees (analogous to block number $M$) or the average number of nodes with the same degree (analogous to block size $N_i$). As we increase the number of observed nodes, the improvement steadily increases for BRNs. The ER curves are noisy versions of the BRN curves. 

Fig.~\ref{fig:network_size_mfa} shows that learned parameters offer a greater improvement for more homogeneous networks. In particular, ER networks with higher density or SF networks with smaller power-law exponents are more homogeneous and have an overall more significant improvement.  
\begin{figure}[h]
  \centering
  \includegraphics[width=0.6\textwidth]{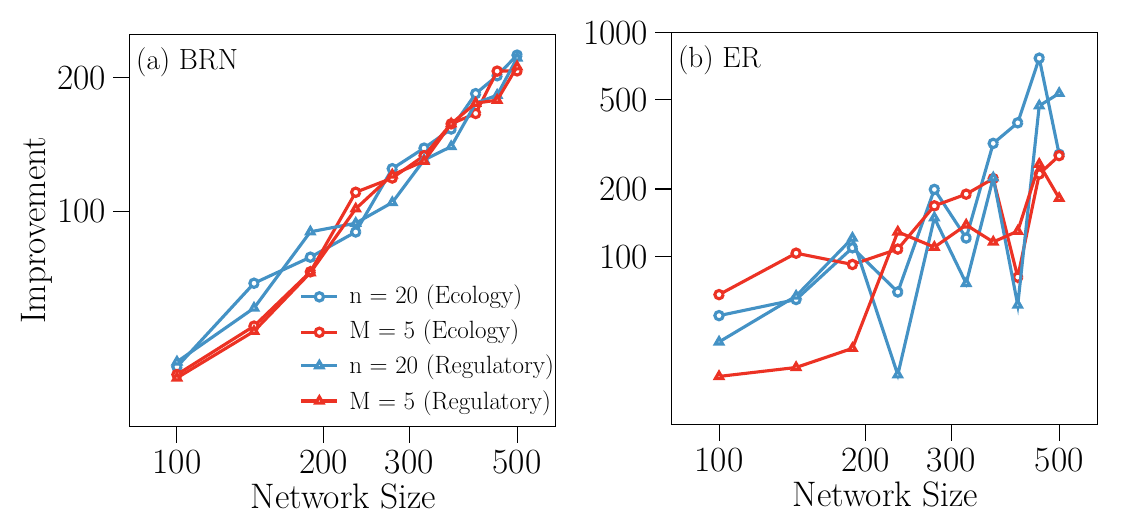}
  \caption{
    This log-log plot demonstrates the improvement in steady-state approximation as the number of observed nodes increases. We sample network sizes from the range $[100,500]$ and fix the observed noise level at 10\%. For BRNs, we obtain the parameters by minimizing the exact objective function Eq.~(\ref{eq:obj1}), while for general networks, we learn the parameters using {\em mfa}+ (\ref{alg:mfap}).  We increase network size by (i) fixing the number of nodes with the same degree at $N_i=20$ and increasing the number of distinct degrees $M$; (ii) fixing $M=5$ and varying $N_i$ from 20 to 100.  }
  \label{fig:topology}
\end{figure}

\begin{figure}[h]
    \centering
    \includegraphics[width=0.6\textwidth]{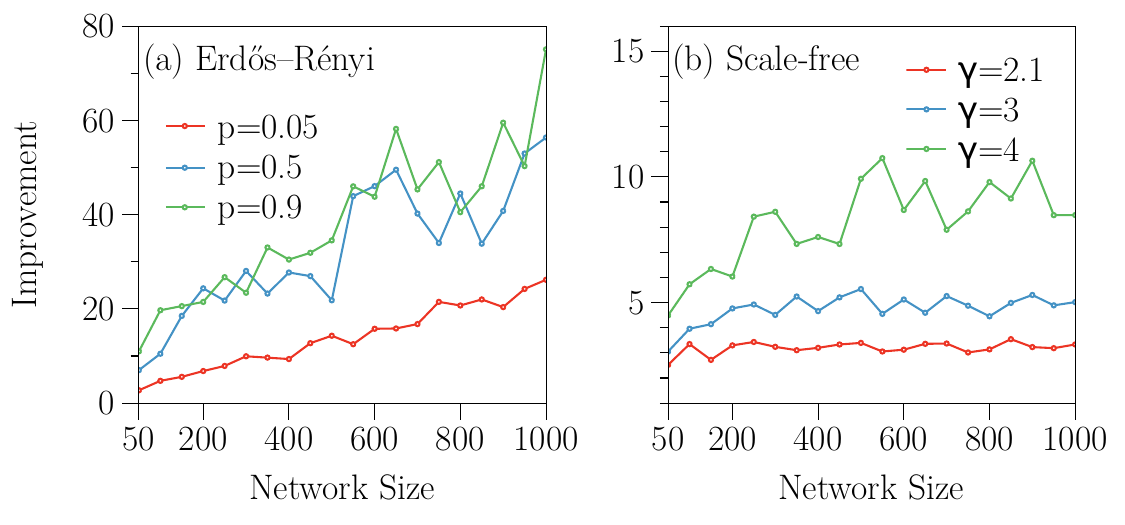}
    \caption{ We use ecological dynamics as an example to investigate the influence of network topology on the accuracy of learned parameters. We sample network sizes from the range of $[50,1000]$ and vary the density $p=\langle \degree \rangle / (N-1)$ for ER networks and the power exponent  $\gamma$ for SF networks. For each network, we learn a parameter from noisy steady states (with 10\% noise level) using NLS with \textit{mfa}.  
    }
    \label{fig:network_size_mfa}
\end{figure}

\section{Conclusion}
We presented an NLS framework to efficiently infer dynamical parameters in complex systems from steady-state data via a mean-field approach. 
The learned dynamics can \textit{denoise} erroneous observations. Our method is robust against mismeasured data, for example, steady states set to 0. The ability to recover completely mismeasured states is a denoising feature that could be of extreme interest to practitioners. For example, one can correct mistakenly measuring a species in an ecological network as tending to extinct; or, falsely reporting the sentiment of a person in a social network. The learned parameters are also able to accurately predict the new steady states if the network undergoes topology changes. This predictive capability empowers intelligent network control by adding or deleting links to guide the system toward desired equilibria. Additionally, the mean-field approach remains highly practical and applicable to large-scale networks, where computational efficiency is crucial.  

Although the mean-field surrogate objective is efficient to compute, it may not accurately capture all types of dynamics or network topologies. Particularly in situations where the dynamics exhibit high nonlinearity or the network's degree distribution is highly heterogeneous, relying solely on the mean-field objective may not fully capture the underlying dynamics. 
In these cases, the mean-field solution provides a sensible initial parameter guess that can be further refined by optimizing the exact objective function. 
 
We've theoretically established the feasibility of a simplified network model that captures modular networks formed from the union of nearly homogeneous communities. This network model extends to encompass general modular networks with restricted inter-block connections. Treating each module as an individual network allows us to capture the global state and apply the mean-field approach separately to each block. However, an increase in connections between modules or an escalation in heterogeneity within a block might lead to a potential decrease in accuracy. Increasing connections would necessitate a more intricate method to account for interactions between the modules~\cite{ma2023generalized}.

\section{ACKNOWLEDGMENTS}
 We acknowledge the support of National Science Foundation under Grant No. 2047488.

\appendix 

\section{Steady-state Computation\label{sec:supp-ss}}

The algorithm box for \textit{full} and \textsc{SteadyState} is provided in this section.
The function \textsc{SteadyState} is used in \textit{mfa}, \textit{mfa}+, and \textit{full} to compute the equilibrium given a system of ODE, an initial condition, and an optional argument of step size. The ODE is evaluated repeatedly until the sum of absolute derivatives of all nodes is close to zero.
We show an implementation for \textsc{SteadyState} based on the forward Euler scheme as follows. 

\begin{algorithm}[h]
\begin{algorithmic}[1]
  \Require {Adjacency matrix $A$, Observed steady states $\mathbf{y}$, Parameter $\param$}
  \Ensure {Steady states $  {\mathbf{x}} $ }   
   \State $ {\mathbf{x}} \leftarrow  \textsc{SteadyState}(  \mathbf{x}\mapsto [F_{1}(\mathbf{x},\param,A) ,\ldots,F_{N}(\mathbf{x},\param,A)],  \mathbf{y})$
\end{algorithmic}
\caption{\textit{full}}\label{alg:full}
\end{algorithm}  

\begin{algorithm}[h]
\begin{algorithmic}[1]
\Require{ ODE $\phi:\mathbb{R}^{n}\rightarrow \mathbb{R}^{n}$, Initial Condition $\mathbf{x}^0\in\mathbb{R}^{n}$, Step Size $h$ (default = 0.001)}
\Ensure{ Steady states $\mathbf{x} $ }  
\State $\tau \leftarrow 0$ 
\While{$ \lvert \phi(\mathbf{x}^{\tau})\rvert_1   =0$ }
    \State { 
    $\mathbf{x}^{\tau+1}\leftarrow \mathbf{x}^{\tau} + h\phi(\mathbf{x}^{\tau})$  
    \State$\tau\leftarrow \tau + 1$
}
\EndWhile
\State $\mathbf{x} \leftarrow \mathbf{x}^\tau$
\end{algorithmic}
\caption{\textsc{SteadyState}}\label{alg:ss}
\end{algorithm}
Here, $\lvert \cdot \rvert_1 $ denotes the $L_1$ norm.

\section{Toy Example for Construction of Surrogate Surface}
Consider a network with 4 nodes $\mathcal{V}=\{1,2,3,4\}$ and four edges $\{(1,2),(1,3),(1,4),(2,3)\}$. Assume the nodal activity is governed by epidemic dynamics with infection rate $B$, the states can be described by 
\begin{align}
\begin{cases} 
\dot{x}_1  = -x_1 +  Bx_2(1-x_1) +  Bx_3(1-x_1) +  Bx_4(1-x_1)  \\ 
\dot{x}_2  = - x_2 + Bx_1(1 - x_2)+  Bx_4 (1-x_2)   \\
\dot{x}_3  = -x_3 +   Bx_1(1-x_3) +B x_4 (1-x_3)  \\ 
\dot{x}_4  = - x_4 + B x_1(1-x_4)  
\end{cases}\label{eq:supp-ode}
\end{align}
Consider the ground truth parameter $B^*=1$. The steady states of this system is $$\mathbf{x}^*=[0.58750594, 0.37008106, 0.52709822, 0.52709822].$$ Assume the noisy observation is $$\mathbf{y}=[0.62091208, 0.3797219,  0.53304093, 0.51694883].$$ 
The exact objective function is written as 
\begin{align}
\mathcal{E}(B) &= \sum_{i=1}^{4}(\steady_i(B) - y_i)^2. 
\end{align}
where $ {x}_i(B)$ is the nontrivial steady state of System (\ref{eq:supp-ode}) at $B$. The states $\steady_i(B)$   can be numerically computed using Runge-Kutta4 until the mean absolute derivative is close to zero. 
For a generic parameter $B$, the nontrivial global effective state satisfies the equation 
\begin{align} 
x_{\text{eff}} = \frac{\beta_{\text{eff}}B-1}{\beta_{\text{eff}} B} 
\end{align}
where $\beta_{\text{eff}} = \frac{\sum_{i=1}^{4}\delta_i^2}{\sum_{i=1}^{4}\delta_i} = \frac{18}{8}=2.25$. 
The following formulas denote the mean-field approximation of the steady state:
\begin{align} 
\begin{cases}
\dot{x}_1  = -x_1 + 3 \frac{2.25B-1}{2.25B} (1-x_1)\\ 
\dot{x}_2  = -x_2 + 2 \frac{2.25B-1}{2.25B} (1-x_2) \\ 
\dot{x}_3  = -x_3 + 2 \frac{2.25B-1}{2.25B}  (1-x_3) \\ 
\dot{x}_4  = -x_4 +  \frac{2.25B-1}{2.25 B}  (1-x_4) \\ 
\end{cases}
\end{align}
Let $\steady_i^{\text{mfa}}(B)$ the mean-field approximation of the steady state at $B$, i.e., the root of the above equations. We can write 
\begin{align*}
&\steady_1^{\text{mfa}}(B) = \frac{6.75B - 3}{9B - 3},\ 
\steady_2^{\text{mfa}}(B) = \frac{4.5B - 2}{6.75B-2}, \\
&\steady_3^{\text{mfa}}(B) = \frac{4.5B - 2}{6.75B-2},  \ 
\steady_4^{\text{mfa}}(B) = \frac{2.25B - 1}{4.5B-1} 
\end{align*}
The surrogate objective function is 
\begin{align}
\mathcal{E}^{\text{mfa}}(B) &= \sum_{i=1}^{4}(\steady_i^{\text{mfa}}(B) - y_i)^2. 
\end{align}
In the following graph, we plot the surrogate objective function value w.r.t. parameter $B$. For comparison, we show the exact objective function in the same plot. The mean-field surrogate objective function resembles the the exact error function. More importantly, the minimizer of the surrogate objective function (1.008) is close to the minimizer of the exact objective function (1.0223). Optimizing the surrogate objective function yields a similar output to the exact. 
\begin{figure}[h]
  \centering
  \includegraphics[width=0.35\textwidth]{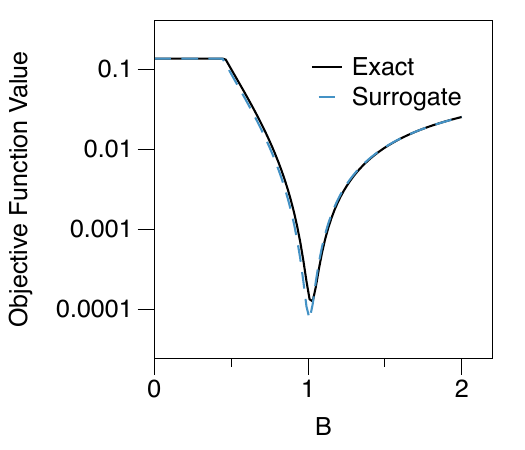}
  \caption{The comparison between the exact objective function and its surrogate version derived from the 4-node toy graph. The y-axis is presented in a logarithmic scale to emphasize the minimum.
   }
  \label{fig:surrogate}
\end{figure}

\section{Gradient Computation \label{sec:grad}}
We derive gradients for the surrogate objective function:
\begin{align}
\obj^{\text{mfa}}(\boldsymbol{\theta},\mathbf{y}) =\frac1N\sum_{i=1}^N(\steady_i^{\text{mfa}}(\param)-y_i)^2. 
\end{align} 
The gradients can be computed via   backpropagation.
\begin{align}
\frac{\partial \mathcal{E}^{\text{mfa}}}{\partial \param} = \frac{2}{N}(\Steady^{\text{mfa}}(\param) - \mathbf{y}) \frac{\partial \Steady^{\text{mfa} }}{\partial \param }
\end{align}
The mean-field steady states $\Steady^{\text{mfa}}$ satisfies
\begin{align}
 &\mathbf{F}(\Steady^{\text{mfa}},\param)\coloneqq f(\Steady^{\text{mfa}}, \param) + \degree^{\text{in}} g(\Steady^{\text{mfa}}, \hat{x}_{\text{eff}}, \param)=\mathbf{0}\\
 &\mathbf{G} (\steady_{\text{eff}},\param) \coloneqq  f(\steady_{\text{eff}},\param)+g(\steady_{\text{eff}}, \param)=0
\end{align}   
By Implicit Function Theorem,
\begin{align}
\frac{\partial \Steady^{\text{mfa} }}{\partial \param }
&=  -\left[\frac{\partial \mathbf{F}}{\partial \Steady^{\text{mfa}}}\right] ^{-1}\frac{\partial \mathbf{F}}{\partial \param } 
\end{align}
Since each nodal mean-field ODE is independent, $\frac{\partial \mathbf{F}}{\partial \Steady^{\text{mfa}}}$ is a $N\times N$ diagonal matrix  with nonzero entries ($i=1,\ldots,N$):
\begin{align}
\left[\frac{\partial \mathbf{F}}{\partial \Steady^{\text{mfa}}}\right]_{ii} = \left.\frac{\partial f}{\partial x_i  }\right\rvert_{(x_{i}^{\text{mfa}},\param)} + \delta_i^{\text{in}} \left.\frac{\partial g}{\partial x_i } \right\rvert_{(x_{i}^{\text{mfa}},\steady_{\text{eff}},\param)}
\end{align}
where the first term on the RHS is the derivative of self-dynamics $f$ w.r.t. the first argument evaluated at ($x_{i}^{\text{mfa}}, \param$) and the second term denotes the derivative of interaction function $g$ w.r.t. the first argument evaluated at $(x_{i}^{\text{mfa}},\steady_{\text{eff}},\param)$.

$\frac{\partial \mathbf{F}}{\partial \param } $ is a $N \times d$ matrix with $d$ is the number of parameters. 
\small\begin{align}
\left[\frac{\partial \mathbf{F}}{\partial \param }\right]_{ij}
&= \left.\frac{\partial f}{\partial \theta_j}\right\rvert_{(\steady_i^{\text{mfa}}, \param)}+
\left.\frac{\partial g}{\partial x_j} 
\right\rvert_{(\steady_i^{\text{mfa}},\steady_{\text{eff}},\param)}
\frac{\partial \steady_{\text{eff}}}{\partial \theta_j} \nonumber\\
&\quad+\left. \frac{\partial g}{\partial \theta_j}\right\rvert_{(\steady_i^{\text{mfa}}, \steady_{\text{mfa}},\param)}.
\end{align}
Applying the Implicit Function Theorem, we obtain
\begin{align}
\frac{\partial \steady_{\text{eff}}}{\partial \theta_j}&= -
\frac{  {\partial \mathbf{G}}/{\partial \theta_j} }{  {\partial \mathbf{G}}/{\partial \steady_{\text{eff}} } }.
\end{align}

\section{Block-regular Networks\label{sup:brn}}
We analyze convergence properties of the NLS framework on a simplified structure: Block-regular network.  
\begin{definition}
\textup{Block-regular Network (BRN).} A block-regular network, denoted as $\mathcal{B}$, consists of $M$ disconnected blocks $\mathcal{B}_{1},\ldots,\mathcal{B}_{M}$. Each block $\mathcal{B}_{i}$ has  $N_{i}$ nodes sampled from a regular network with degree $r_{i}$ and $r_{i}\neq r_{j}$ for $i\neq j$. $r_{i}$ is called the block degree of $\mathcal{B}_{i}$.
\end{definition}

We establish the empirical identifiability of the specified ground truth parameter $\param^*$ (defined in main text)  given noiseless data for ecological and gene regulatory dynamics by finding a block degree sequence $r_1,\ldots,r_M$ s.t. within a subset of $\mathbb{R}^{d}$, the surfaces 
\begin{align}
\{
\param\in \mathbb{R}^{d}: \ 
f(x_i,\param) + r_i g(x_i, x_i,\param) = 0\}_{i=1}^{M} 
\end{align}
only intersect at $\param^*$, where $x_i$ are the given fixed steady states.

Suppose we observe more than five blocks $\mathcal{B}_{i}$ for ecology dynamics. The state $x_i$ of block $\mathcal{B}_{i}$ is a root of 
\begin{align}
B + x\left(1-\frac{x}{K}\right)\left(\frac{x}{C}-1\right) + r_{i} \frac{x^2}{D+E'x}=0\label{eq:mfa-eco}
\end{align}
where $E'=E+H$.
Using the first three blocks, we have 
  \begin{align}
  \label{lin-system1}
  \begin{bmatrix}
  1& x_{1}^{2} & -x_{1}^{3} \\
  1& x_{2}^{2}& -x_{2}^{3}\\
  1& x_{3}^{2}& -x_{3 }^{3}
  \end{bmatrix}\begin{bmatrix}
  B\\K'\\C'
  \end{bmatrix} &= 
  \begin{bmatrix}
  C_{1}(D,E')\\C_{2}(D,E')\\C_{3}(D,E
  ')
  \end{bmatrix}
  \end{align}
  where  $C_{i}(D,E')=\frac{ r_{i} x_{i}^{2} }{ D+ E x_{i} }+x_{i}$, $K'=\frac{1}{K}+\frac{1}{C}$ and $C'=1/KC$.
  Assuming the matrix in the LHS of Eq.~(\ref{lin-system1}) is invertible, we can express $B,K',C'$ uniquely using any fixed $(D,E')$.
  Using the forth block, we obtain 
  \begin{align*}
  \psi(D,E', r_{1},\ldots,r_{4})&=
  a_{1}\frac{1}{D+E'x_{1}} 
  + a_{2}\frac{1}{D+E'x_{2}}\\ &
  + a_{3}\frac{1}{D+E' x_{3}} 
  + a_{4} \frac{1}{D+ E' x_{4}} +a_{5} \\&= 0
  \end{align*}
  for some constants $a_{i}$.
  The first five blocks yield the following set of  $D,E'$ that agree with the steady state data.
  \begin{align*}
  \Gamma_{[i, i + 4]} 
  \coloneqq \bigcap_{j=i+3,i+4} &\{(D,E')\in [0,5]^2\ |\\ &\psi(D,E', r_{i},r_{i+1},r_{i+2},r_{j})=0 \}.
  \end{align*}
  In ecology networks, $K\geq C$~\cite{gao2016universal}.
  $K,C$ are the roots to the quadratic equation $  x^{2} +  K'C'  x +1/C'=0$. Therefore $(K,C)$ is uniquely determined given $K',C'$.
  It suffices to find a block degree sequence that pins down a unique $(D,E')$ and the corresponding determinant of the matrix in (\ref{lin-system1}) is invertible.
  Set the degree sequence as $\boldsymbol{\delta}=[7,\ 13,\ 17,\ 23,\ 29,\ 58,\ 68,\ 79,\ 89,\ 100]$. We evaluate the steady states from an initial condition of 6 for each block, with ground truth parameter $[B,K,C,D,E']=[0.1,5,1,5,1]$.
  The determinant of the matrix in system (\ref{lin-system1})
  is $-1434.95\neq 0$. We plot $(D,E')$ that are in $\Gamma_{[1,5]}$ and $\Gamma_{[6,10]}$ in Fig.~\ref{fig:brn-uniqueness}(a). We treat values with magnitude smaller than or equal to 1e-7 as zero to account for numerical imprecision and found $\Gamma_{[1,5]}\cap \Gamma_{[6,10]}=\{(5,1)\}$.

\begin{figure}[h]
    \centering
    \includegraphics[width=0.6\textwidth]{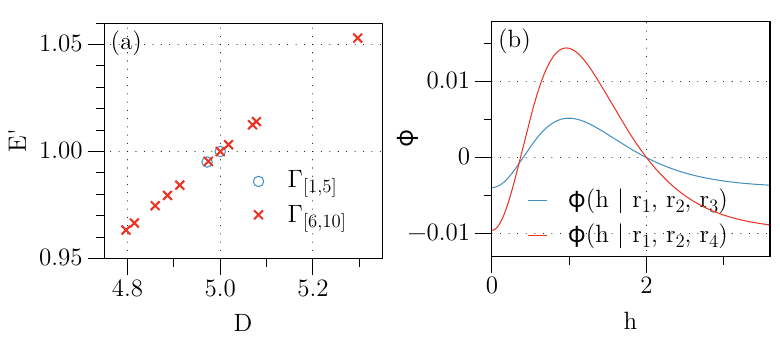}
    \caption{(a) The blue (red) dots live in the subspace defined by degree sequence $[7,13,17,23,29]$ ($[58,68,79,89,100]$). The two subspaces of ecological parameter space $(D,E')\in [0,5]^2$ intersect at $(D,E')=(5,1)$, which uniquely determines the other parameters $(B,K,C)$.  (b) The blue (red) curve's intersection with $y=0$ denotes the $h$ that agree with the steady-state equations defined by degree sequence $[7,13,17]$ ($[7,13,23]$). The two subspaces of gene regulatory parameter space intersect at $h = 2$, which uniquely determines the other parameters $f=B=1$.\label{fig:brn-uniqueness}}
  \end{figure}
 
  We now consider the gene regulatory dynamics.
  Suppose we observe a BRN with more than four blocks.
  We have four equations of the following form 
  \begin{align*}
  -B x_{i}^{f}+r_{i} \frac{x_{i}^{h}}{x_{i}^{h}+1}&=0.
  \end{align*}
  Let $C_{i}(h)= r_{i}\frac{x_{i}^{h}}{x_{i}^{h}+1}$.
  Blocks $\mathcal{B}_{1},\mathcal{B}_{2}$ give 
  \begin{align}
  B&= C_{1}(h)x_{1}^{-f}\label{eq:B}\\
  f&= \frac{\ln \left(C_{2}(h)/C_{1}(h)\right)}{\ln \left(x_{2}/x_{1}\right)}\label{eq:f}
  \end{align}
  Substituting $B,f$'s expressions into the equation for $\mathcal{B}_{3}$, we eliminate $B,f$ and obtain 
  \begin{align*}
  \label{eq:reg-unique}
   &\ln \frac{x_{3}^{h}+1}{x_{1}^{h}+1}
   - \frac{\ln\left(x_{3}/x_{1}\right)}{\ln\left(x_{2}/x_{1}\right)} \ln \frac{x_{2}^{h}+1}{x_{1}^{h}+1}\\
   &-\ln \frac{r_{3}}{r_{1}}+ \frac{\ln\left(x_{3}/x_{1}\right)}{\ln\left(x_{2}/x_{1}\right)} \ln \frac{r_{2}}{r_{1}}=0
  \end{align*}
  The LHS is a function of $h$, dependent on the first three blocks.
  Denote the LHS as $\phi(h\ |\  r_{1},r_{2},r_{3})$.
  Using block $\mathcal{B}_{4}$, we have $\phi(h\ |\  r_{1},r_{2},r_{4}) =0$.
  Simulate a BRN with block degrees $[7,13,17,23]$ and ground truth parameters $[1,1,2]$. 
  We observe that $\{h\in [0,2]: \phi(h\in [0,2]\ |\  r_{1},r_{2},r_{3})=0\}\cap \{h:\phi(h\ |\  r_{1},r_{2},r_{4})=0\}=\{2\}$, which uniquely determines $f=B=1$ using equations  Eq.~(\ref{eq:B}), Eq.~(\ref{eq:f}) (Fig.~\ref{fig:brn-uniqueness}(b)).

{\bf Connection to General Networks.}
We generalize the identifiability of the ground truth parameter \math{\param^*} for a BRN to a general network.
One can show that if the state deviation objective for a BRN has a unique minimizer \math{\param^*}, then \math{\param^*} is the unique parameter that produces the noiseless steady states of a slightly perturbed BRN. 

\begin{theorem}\label{thm:perturb_brn}
Let $\mathcal{B}$ be an arbitrary BRN such that each block is a regular network. Randomly perturb $\mathcal{B}$ $k$ times to obtain $\mathcal{G}$. Let $\param^*$ be the ground truth parameter of $\mathcal{G}$. Denote the steady state of network $\mathcal{G}$ under $\param^*$ as $\mathbf{x}^* \coloneqq \Steady(\param^*, \mathcal{G})$. $\mathbf{x}' \coloneqq \Steady(\param^*, \mathcal{B})$ is the steady state of the BRN $\mathcal{B}$. Given the following assumptions
\begin{inparaenum}[(i)] 
    \item The perturbation $k$ is small.
    \item $\Steady(\param,\mathcal{G})$ is continuous w.r.t. $\param$ around $\param^*$,
    \item $\lVert\Steady(\param, \mathcal{B})- \mathbf{x}'\rVert\gg 0$ for $\param$ that is far from $\param^*$,
\end{inparaenum}
we conclude $\param^*$ is the unique parameter that yields the steady states  $\mathbf{x}^*$ of $\mathcal{G}$.
\end{theorem}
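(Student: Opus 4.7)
The plan is to argue by contradiction: assume some $\param' \neq \param^*$ also satisfies $\Steady(\param',\mathcal{G})=\mathbf{x}^*$, then derive a contradiction by splitting into a ``far'' case handled via the BRN identifiability hypothesis (iii) and a ``near'' case handled by the local invertibility of Lemma~\ref{lem:smooth_apx}. The bridge between the BRN $\mathcal{B}$ and the perturbed network $\mathcal{G}$ is a quantitative continuity bound: since $k$ edge perturbations alter $A$ in $O(k)$ entries, one should show that on any compact parameter set $\Theta \ni \param^*$,
\begin{align*}
\sup_{\param \in \Theta}\lVert \Steady(\param,\mathcal{G}) - \Steady(\param,\mathcal{B})\rVert \le \eta(k),
\end{align*}
with $\eta(k)\to 0$ as $k\to 0$. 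I would establish this by linearizing the fixed-point equation $F(\mathbf{x},\param,A)=0$ around $(\mathbf{x}',\param^*,A_{\mathcal{B}})$ and applying the Implicit Function Theorem with respect to the adjacency matrix, using that the Jacobian $\partial F/\partial \mathbf{x}$ is uniformly non-singular on $\Theta$ (a consequence of condition~(\ref{lemma_condition})). A direct consequence is $\lVert \mathbf{x}^* - \mathbf{x}'\rVert \le \eta(k)$.

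With this bound in hand, suppose $\Steady(\param',\mathcal{G})=\mathbf{x}^*$ for some $\param'\neq \param^*$. First consider the case where $\param'$ lies outside a small neighborhood $U$ of $\param^*$. Hypothesis (iii) then provides a gap $\lVert \Steady(\param',\mathcal{B}) - \mathbf{x}'\rVert \geq c > 0$, and the triangle inequality yields
\begin{align*}
\lVert \Steady(\param',\mathcal{G}) - \mathbf{x}^*\rVert \ge \lVert \Steady(\param',\mathcal{B}) - \mathbf{x}'\rVert - 2\eta(k) \ge c - 2\eta(k),
\end{align*}
which is strictly positive for $k$ small enough by hypothesis (i), contradicting $\Steady(\param',\mathcal{G})=\mathbf{x}^*$.

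The remaining case, $\param' \in U$, is the main obstacle: continuity of $\Steady(\cdot,\mathcal{G})$ alone is insufficient, since we need local injectivity. I would invoke Lemma~\ref{lem:smooth_apx}: condition~(\ref{lemma_condition}) implies the Jacobian of the steady-state map is invertible at $\param^*$ for $\mathcal{B}$, and by applying the same perturbation argument to the Jacobian entries, it remains invertible at $\param^*$ for $\mathcal{G}$ when $k$ is small (the determinant is continuous and nonzero). The Implicit Function Theorem then furnishes a neighborhood $U$ on which $\param \mapsto \Steady(\param,\mathcal{G})$ is a diffeomorphism onto its image, forcing $\param'=\param^*$ and completing the contradiction.

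The hard part will be quantifying how small $k$ must be so that both ingredients --- the continuity bound $\eta(k)$ and the persistence of local invertibility of the steady-state map under topology perturbation --- hold simultaneously relative to the gap constant $c$ from hypothesis (iii). This requires bounds on $\lVert (\partial F/\partial \mathbf{x})^{-1}\rVert$ that depend on degree heterogeneity and on the derivatives of $f,g$ evaluated near $\mathbf{x}'$, both of which may scale unfavorably with network size. A technically cleaner statement would replace ``$k$ is small'' by an explicit smallness condition of the form $k \le k_0(c,\lVert \nabla F\rVert,\lVert (\partial F/\partial \mathbf{x})^{-1}\rVert)$, but I expect the qualitative conclusion to hold as stated under the three listed hypotheses.
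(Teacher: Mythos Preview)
Your proposal follows the same skeleton as the paper's proof sketch: argue by contradiction, split into a ``far'' case and a ``near'' case, and for the far case transfer the identifiability gap of hypothesis~(iii) from $\mathcal{B}$ to $\mathcal{G}$ using that small $k$ makes the two networks' steady-state maps close. Your version simply makes the paper's chain of ``$\approx$'' signs quantitative via an explicit bound $\eta(k)$ obtained from the Implicit Function Theorem in the adjacency-matrix variable; the paper just asserts $\Steady(\param,\mathcal{B})\approx\Steady(\param,\mathcal{G})$ ``by similarity of $\mathcal{G}$ and $\mathcal{B}$''.

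The one substantive difference is the near case. The paper's sketch dispatches it by citing only hypothesis~(ii), continuity of $\Steady(\cdot,\mathcal{G})$ near $\param^*$, which --- as you correctly flag --- does not by itself rule out a distinct nearby $\param'$ with the same steady state. Your route through Lemma~\ref{lem:smooth_apx}, transporting the nondegeneracy condition~(\ref{lemma_condition}) from $\mathcal{B}$ to $\mathcal{G}$ by continuity of the Jacobian determinant and then invoking local invertibility, is the right way to close that step, and it is not supplied in the paper. So the two arguments are the same in outline, but your near-case treatment actually justifies what the paper's sketch leaves as a gap; the cost is the extra Jacobian bookkeeping you identify at the end.
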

\begin{proof} (Sketch)
Suppose there exists a parameter $\boldsymbol{\theta}$ such that $\boldsymbol{\theta}\neq \boldsymbol{\theta}^*$ and $\Steady(\boldsymbol{\theta}, \mathcal{G}) = \mathbf{x}^*$.
Case 1: $\boldsymbol{\theta}$ is close to $\boldsymbol{\theta}^*$. By continuity of the function $\Steady(\cdot,\cdot)$ around $\boldsymbol{\theta}^*$ (Assumption $\textit{2}$), we eliminate this possibility.
Case 2: $\boldsymbol{\theta}$ is far from $\boldsymbol{\theta}^*$. Due to the little-perturbation assumption (Assumption $\textit{1}$), $\mathcal{B}$ and $\mathcal{G}$ have a similar structure and degree distribution. Then
\begin{align*}
\Steady(\boldsymbol{\theta}, \mathcal{B}) &\approx \Steady(\boldsymbol{\theta}, \mathcal{G}) \quad \text{(by similarity of $\mathcal{G}$ and $\mathcal{B}$)}\\
&= \mathbf{x}^* \quad \text{(by assumption)}\\
&= \Steady(\boldsymbol{\theta}^*, \mathcal{G})  \quad \text{(by definition)}\\
&\approx \Steady(\boldsymbol{\theta}^*, \mathcal{B}) \quad \text{(by similarity of $\mathcal{G}$ and $\mathcal{B}$)}\\
&= \mathbf{x}' \quad \text{(by definition)}
\end{align*}
$\mathbf{x}^* \approx \mathbf{x}'$ violates the assumption that the steady state of $\mathcal{B}$ should be distinguishable given parameters that are far away from each other (Assumption $\textit{3}$).
It follows that $ \lVert\Steady(\param,\mathcal{G}) -\mathbf{x}^* \rVert^2 $ has a unique minimum.
\end{proof}

\bibliography{reference}%
\end{document}